\newtheorem{lmm}{Lemma}
\newtheorem{xmpl}{Example}
\newtheorem{thrm}{Theorem}
\newtheorem{rmrk}{Remark}
\newtheorem{crllr}{Corollary}
\tikzset{p0/.style = {shape = circle,    draw, thick, minimum size = 0.6cm}}
\tikzset{p1/.style = {shape = rectangle, draw, thick, minimum size = 0.6cm}}
\tikzset{>=stealth}
\tikzset{every edge/.style = {thick, ->, draw}}
\tikzset{every loop/.style = {thick, ->, draw}}
\renewcommand{\epsilon}{\varepsilon}
\newcommand{\pow}[1]{2^{#1}}
\newcommand{\nats}{\mathbb{N}}
\newcommand{\natspref}[1]{[#1]}
\newcommand{\set}[1]{\{#1\}}
\newcommand{\mem}{\mathfrak{M}}
\newcommand{\game}{\mathcal{G}}
\newcommand{\win}{\mathrm{Win}}
\newcommand{\arena}{\mathcal{A}}
\newcommand{\behavior}{\mathrm{Beh}}
\DeclareMathOperator{\last}{\mathrm{Last}}
\DeclareMathOperator{\update}{\mathrm{Upd}}
\DeclareMathOperator{\nxt}{\mathrm{Nxt}}
\DeclareMathOperator{\init}{\mathrm{Init}}
\newcommand{\pnlty}{p}
\DeclareMathOperator{\waittime}{\mathrm{wt}}
\DeclareMathOperator{\waitvec}{\mathrm{\overline{\waittime}}}
\DeclareMathOperator{\val}{\mathrm{val}}
\newcommand{\weight}{w}
\newcommand{\mpgval}{\nu}
\DeclareMathOperator{\extend}{\mathrm{ext}}
\DeclareMathOperator{\project}{\mathrm{proj}}
\newcommand{\bnd}{b}
\newcommand{\sigmaopt}{\sigma_{\mathrm{opt}}}
\newcommand{\tauopt}{\tau_{\mathrm{opt}}}
\newcommand{\sigmaoptprime}{\sigma_{\mathrm{opt}}'}
\newcommand{\tauoptprime}{\tau_{\mathrm{opt}}'}
\newcommand{\sigmaopthat}{\widehat{\sigma}_{\mathrm{opt}}}
\newcommand{\sigmaopthatprime}{\widehat{\sigma}_{\mathrm{opt}}'}
\newcommand{\valgame}{\val_{\game}}
\newcommand{\exptime}{\textsc{Exptime}}
\newcommand{\np}{\textsc{NP}}
\newcommand{\twoexptime}{\textsc{2Exptime}}
\newcommand{\conp}{\textsc{co-NP}}
\newcommand{\rrcond}{(Q_j,P_j)_{j \in \natspref{k}}}
\newcommand{\tmaxj}{t_{\max_j}}
\begin{document}

\title{Optimal Strategy Synthesis\\for Request-Response Games \thanks{Research partially supported by ANR AVERISS, by the DFG Research Training Group 1298 ``AlgoSyn'', by the CASSTING Project funded by the European Comission's 7th Framework Programme, and by the DFG Transregional Collaborative Research Center 14 ``AVACS''.}}

\author[$\dag$]{Florian Horn}
\author[$\dag\dag$]{Wolfgang Thomas}
\author[$\dag\dag$]{Nico Wallmeier}
\author[$\dag\dag\dag$]{\newline Martin Zimmermann}

\affil[$\dag$]{\small LIAFA,  Universit\'{e} Denis Diderot - Paris 7, 75205 Paris CEDEX 13, France \texttt{florian.horn@liafa.jussieu.fr}}

\affil[$\dag\dag$]{Lehrstuhl f\"{u}r Informatik 7, RWTH Aachen University, 52056 Aachen, Germany \newline \texttt{$\{$thomas, wallmeier$\}$@automata.rwth-aachen.de}}

\affil[$\dag\dag\dag$]{Reactive Systems Group, Saarland University, 66123 Saarbr\"{u}cken, Germany \texttt{zimmermann@react.uni-saarland.de}}

\renewcommand\Authands{ and }

\date{}

\maketitle

\begin{abstract}
\noindent We show the existence and effective computability of optimal winning strategies for request-response games in case the quality of a play is measured by the limit superior of the mean accumulated waiting times between requests and their responses.
\end{abstract}

\section{Introduction}
\label{sec_intro}

Request-response (RR) conditions are ubiquitous in the formal verification of reactive systems, e.g., every request to access a shared resource is eventually granted. Formally, such a condition is expressed as a pair~$(Q, P)$ of state properties, the first one representing the requests and the second one representing the responses. The corresponding request-response condition is satisfied if each time a state in $Q$ is visited, then at this or a later time a state in $P$ is visited, i.e., every request is answered by a response. In linear temporal logic, this requirement is formalized as $\mathbf{G}(Q \rightarrow \mathbf{F}P)$. In practice, one considers conjunctions of request-response conditions which occur in conjunction with safety conditions. In the following, we assume (w.l.o.g.) the state space to be restricted to those states satisfying the safety conditions. Alternatively, one can encode a safety condition as a request-response condition with empty set of responses. 

Synthesis for RR conditions was investigated in a series of works~\cite{ChatterjeeHenzingerHorn11, WallmeierHuttenThomas03, Zimmermann09} considering request-response games. The winning condition of such a game is a conjunction of request-response conditions, i.e., it is given by a finite family~$(Q_j, P_j)_{j \in [k]}$ of $k$ request-response pairs. Player~$0$ wins a play (an infinite path through the finite game graph) if the request-reponse condition~$(Q_j, P_j)$ is satisfied for every $j$. Wallmeier et al.~\cite{WallmeierHuttenThomas03} presented a reduction from RR to B\"{u}chi games using an exponentially-sized memory structure and thereby gave an $\exptime$-algorithm and an exponential upper bound on the memory requirements for both players. Furthermore, they proved an exponential lower bound on the memory requirements for Player~$0$. These results where complemented by Chatterjee et al.~\cite{ChatterjeeHenzingerHorn11} who proved $\exptime$-completeness of solving RR games and tight exponential lower bounds on the memory requirements for both players.

In request-response games, there is an intuitive notion of the waiting time between a request and its earliest response, which can be used to measure the quality of plays and strategies (from Player~$0$'s point of view). There are several ways to aggregate the waiting times of a play to measure the quality of this play. A simple one is to take the maximal waiting time reached during the play and measure the quality of a strategy in terms of the maximal waiting time it allows during a play that is consistent with the strategy. It is straightforward to show that a finite-state winning strategy of size~$s$ bounds the waiting times by $n s$, where $n$ is the number of vertices of the game graph. Thus, the exponential upper bound on memory requirements in RR games also yields an exponential upper bound on the maximal waiting time during plays consistent with this strategy. On the other hand, there are games witnessing matching exponential lower bounds on the maximal waiting time allowed by winning strategies. These results are presented in Section~\ref{sec_rrgames}. 

However, our main result pertains to a more sophisticated way of aggregating waiting times along a play: the value of a play is defined to be the limit superior of the mean accumulated waiting times of the play. In turn, the value of a strategy is the supremum over the values of all plays that are consistent with it. These considerations add a quantitative aspect to the synthesis problem that goes beyond the mere satisfaction of the winning condition and deciding whether there is a bound on the waiting time, by asking for an optimal winning strategy. Our main result states that an optimal strategy always exists, can be presented as a finite-state strategy, and can be effectively computed. To show this, we first prove an upper bound on the value of optimal strategies. Then, we show that for every strategy whose value is smaller than the bound (in particular optimal ones) there is a strategy of smaller or equal value that bounds the waiting times by some doubly-exponential bound. Thus, the search space for an optimal winning strategy is finite and the problem of finding one can be reduced to computing an optimal strategy for a mean-payoff game of doubly-exponential size which encodes the search space. 

Our result fits into a larger series of works which aim at extending synthesis from a decision problem to an optimization problem by asking for optimal winning strategies according to a given quality measure for the winning condition under consideration, e.g., the use of mean-payoff
objectives and weighted automata to model quantitative aspects in the winning
condition~\cite{BCHJ09, CernyChatterjeeHenzingerRadhakrishnaSingh11,
ChatterjeeHenzingerJurdzinski05} and quantitative strengthenings of parity and Streett conditions~\cite{ChatterjeeHenzingerHorn09, FijalkowZimmermann12}. In another line of research, linear temporal logic is extended by parameterized operators~\cite{AlurEtessamiLaTorrePeled01, KupfermanPitermanVardi09} equipped with variables that bound their scope, e.g., the formula~$\mathbf{G}(Q \rightarrow \mathbf{F}_{\le x}P)$ of parametric linear temporal logic (PLTL) is satisfied, if there is some value~$\alpha(x)$ such that every request is answered within $\alpha(x)$ steps. Thus, measuring the quality of plays and strategies in RR games using the maximal waiting time is expressible in PLTL. Asking whether there exists a variable valuation and a winning strategy for Player~$0$ in a game with a PLTL winning condition is $\twoexptime$-complete, while optimal winning strategies can be computed in triply-exponential time~\cite{Zimmermann13}.

Finally, there has been a lot of interest in so-called energy games, whose
winning conditions ask for the existence of an initial amount of energy such that a positive energy level is maintained throughout the play, where energy is consumed or recharged while traversing edges of the game graph. Solving energy games with multiple resources is in general
intractable~\cite{FahrenbergJuhlLarsenSrba11} while so-called consumption
games, a subclass of energy games, are shown to be tractable in~\cite{BCKN12}.
Energy parity games, whose winning conditions are a conjunction
of a (single resource) energy and a parity condition, can be solved in $\np
\cap \conp$ and one player (the spoiling one) has positional winning
strategies while the other needs exponential
memory~\cite{ChatterjeeDoyen10}. 

The paper is structured as follows: in Section~\ref{sec_defs}, we introduce basic definitions about infinite games. In Section~\ref{sec_rrgames}, we introduce RR games, define waiting times and the induced quality measure and prove some preliminary results. In Section~\ref{sec_bounding}, we show that for every strategy of small value there is a strategy of smaller or equal value that additionally bounds the waiting times by some doubly-exponential bound. To this end, we give a quantitive version of Dickson's Lemma in Subsection~\ref{subsec_dickson} and use this to obtain an upper bound in Subsection~\ref{subsec_bounding}. Using this upper bound, we are able to construct a mean-payoff game whose optimal strategy induces an optimal strategy for the RR game. This reduction is presented in Section~\ref{sec_computingoptimalstrategies}. We conclude in Section~\ref{sec_conc} with a discussion and some open questions.

The present paper is a revised version with simplified proofs of results announced in the conference paper~\cite{HornThomasWallmeier08}, which in turn extended results of the third author's dissertation~\cite{Wallmeier08}.

\section{Definitions}
\label{sec_defs}

We denote the set of non-negative integers by $\nats$. For every $k\in \nats$ we
define $\natspref{k} = \set{1, \ldots, k}$, so in particular $\natspref{0} =
\emptyset$. The power set of a set~$S$ is denoted by $2^S$. The last letter of a finite non-empty word $w$ is denoted by $\last(w)$.

An arena $\arena = (V, V_0, V_1, E)$ consists of a finite, directed graph
$(V,E)$, $V_0 \subseteq V$ and $V_1 = V \setminus V_0$, where $V_i$ denotes
the vertices of Player~$i$. In examples, we denote the vertices of Player~$0$ by circles and the vertices of Player~$1$ by squares. We require every vertex to have an outgoing edge to avoid having to deal with finite plays. The size $|\arena|$ of
$\arena$ is the cardinality of~$V$. A play in $\arena$ starting in $v\in V$ is
an infinite sequence $\rho = \rho_0 \rho_1 \rho_2 \cdots$ with $\rho_0 =
v$ and $(\rho_n, \rho_{n+1}) \in E$ for all $n \in \nats$. 

A game~$\game = (
\arena, \win )$ consists of an arena~$\arena$ and a set $\win \subseteq
V^\omega$ of winning plays for Player~$0$, which is often defined implicitly.
The set of winning plays for Player~$1$ is $V^\omega \setminus \win$.

A strategy for Player~$i$ is a mapping $\sigma \colon V^*V_i \rightarrow V$
such that $(v, \sigma(wv)) \in E$ for all $wv \in V^* V_i$. We say that
$\sigma$ is positional if $\sigma(wv) = \sigma(v)$ for every $wv \in V^*V_i$.
A play~$\rho_0 \rho_1 \rho_2 \cdots$ is consistent with $\sigma$ if
$\rho_{n+1} = \sigma( \rho_0 \cdots \rho_n)$ for every~$n$ with $\rho_n \in
V_i$. Given a set~$W \subseteq V$, we denote by $\behavior(W, \sigma)$ the set
of plays that start in $W$ and are consistent with $\sigma$. A
strategy~$\sigma$ for Player~$i$ is a winning strategy from $W$ if every play
in $\behavior(W, \sigma)$ is winning for Player~$i$. The winning region~$W_i(
\game )$ of Player~$i$ in $\game$ contains all vertices from which Player~$i$
has a winning strategy. We always have $W_0( \game ) \cap W_1( \game ) =
\emptyset$ and $\game$ is determined if~$W_0(\game) \cup W_1(\game) = V$. A
winning strategy for Player~$i$ is uniform, if it is winning from
$W_i(\game)$.

A memory structure~$\mem = (M, \init, \update)$ for an arena $(V, V_0, V_1, E
)$ consists of a finite set~$M$ of memory states, an initialization
function~$\init \colon V \rightarrow M$, and an update function~$\update\colon
M\times V\rightarrow M$. The update function can be extended to
$\update^*\colon V^+\rightarrow M$ by defining $\update^*( \rho_0 ) = \init( \rho_0 )$
and $\update^* ( \rho_0 \cdots \rho_n \rho_{n+1} ) = \update( \update^*(
\rho_0 \cdots \rho_n), \rho_{n+1})$. A next-move function (for Player~$i$)
$\nxt \colon V_i \times M \rightarrow V$ has to satisfy $(v, \nxt(v, m)) \in
E$ for all $v \in V_i$ and $m \in M$. The next-move function induces a strategy~$\sigma$ for
Player~$i$ with memory~$\mem$ via the definition $\sigma(\rho_0\cdots\rho_n)=\nxt( \rho_n,
\update^*( \rho_0 \cdots \rho_n))$. The size of $\mem$ (and, slightly abusive, $\sigma$) is $|M|$. A strategy~$\sigma$ is finite-state if it
can be implemented with a memory structure.

An arena $\arena = (V, V_0, V_1, E)$ and a memory structure $\mem = (M, \init,
\update)$ for $\arena$ induce the expanded arena $\arena\times\mem = (V \times
M, V_0 \times M, V_1 \times M, E' )$ where we have $((v,m), (v',m')) \in E'$ if and
only if $(v,v') \in E$ and $\update(m, v' ) = m'$. Furthermore, every play $\rho =
\rho_0\rho_1\rho_2\cdots$ in the original arena~$\arena$ has a unique extended
play~$\extend(\rho) = (\rho_0, m_0) (\rho_1, m_1) (\rho_2, m_2) \cdots$ in
$\arena \times \mem$ defined by $m_0 = \init( \rho_0 )$ and $m_{n+1} =
\update(m_n, \rho_{n+1})$, i.e., we have $m_n = \update^*(\rho_0 \cdots \rho_n)$.
Dually, every play~$\rho = (\rho_0, m_0)(\rho_1, m_1)(\rho_2, m_2)\cdots$ in
$\arena \times \mem$ has a projected play~$\project(\rho) =
\rho_0\rho_1\rho_2\cdots$ in $\arena$. Note that we have
$\project(\extend(\rho)) = \rho$, but $\extend(\project(\rho')) = \rho'$ is
only true if $\rho'$ starts in a vertex of the form~$(v, \init(v))$.

A game $\game = ( \arena, \win)$ is reducible to $\game' = ( \arena', \win')$
via $\mem$, written $\game \le_{ \mem } \game'$, if $\arena' = \arena \times
\mem$ and every play $\rho$ in $\game$ is won by the player who wins the
extended play~$\extend(\rho)$ in $\game'$, i.e., $\rho \in \win$ if and only
if $\extend(\rho) \in \win'$.

\begin{lmm} 
\label{lem_reductiongivesmemory}
Let $\game$ be a game with vertex set $V$ and $W \subseteq V$. If $\game \le_{
\mem } \game'$ and Player~$i$ has a positional winning strategy for $\game'$
from $\{(v, \init(v))\mid v \in W\}$, then she has a winning strategy
with memory~$\mem$ for $\game$ from~$W$.
\end{lmm}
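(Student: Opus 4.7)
The plan is to lift the positional winning strategy~$\sigma'$ for Player~$i$ in $\game'$ to a finite-state strategy~$\sigma$ for Player~$i$ in $\game$ that uses~$\mem$ as its memory. The whole proof is essentially a bookkeeping argument: there are no combinatorial obstructions because the expanded arena~$\arena \times \mem$ is precisely designed to let positional strategies there correspond to $\mem$-strategies in~$\arena$.

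First I would define the next-move function. For $v \in V_i$ and $m \in M$, set $\nxt(v, m) := v'$, where $v'$ is the unique vertex of $\arena$ such that $\sigma'(v, m) = (v', \update(m, v'))$. This is well defined because $\sigma'$ is a positional strategy in $\arena \times \mem$, so $\sigma'(v,m)$ is a successor of $(v,m)$ in the edge relation $E'$, which forces it to have the form $(v', \update(m, v'))$ with $(v, v') \in E$. Thus $(v, \nxt(v, m)) \in E$, as required for a next-move function. Combining $\nxt$ with~$\mem$ yields a strategy~$\sigma$ of size~$|\mem|$ for Player~$i$ in~$\game$.

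Next I would prove: for every play~$\rho = \rho_0 \rho_1 \rho_2 \cdots$ in $\game$ that starts in $W$ and is consistent with~$\sigma$, the extended play~$\extend(\rho) = (\rho_0, m_0)(\rho_1, m_1)(\rho_2, m_2)\cdots$ is consistent with~$\sigma'$ in $\game'$. This goes by a routine induction on $n$: at a position with $\rho_n \in V_i$, the strategy~$\sigma$ picks $\rho_{n+1} = \nxt(\rho_n, m_n)$, so by the definition of $\nxt$ we have $\sigma'(\rho_n, m_n) = (\rho_{n+1}, \update(m_n, \rho_{n+1})) = (\rho_{n+1}, m_{n+1})$, which is exactly the successor chosen along $\extend(\rho)$.

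Finally, since $\rho$ starts in some $v \in W$, its extension starts in $(v, \init(v))$, which lies in the set from which $\sigma'$ is winning. Hence $\extend(\rho) \in \win'$, and by the definition of $\game \le_\mem \game'$ this gives $\rho \in \win$. Thus $\sigma$ is a winning strategy with memory~$\mem$ for Player~$i$ in~$\game$ from~$W$. The only point that requires any care is checking that $\sigma'$, being positional on $\arena \times \mem$, really does pick successors of the form $(v', \update(m, v'))$ so that the definition of $\nxt$ makes sense; this is immediate from the edge relation~$E'$ of the expanded arena.
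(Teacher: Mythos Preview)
Your argument is correct and is exactly the standard construction one would expect. The paper, however, does not actually prove this lemma: it is stated without proof as a folklore result and then only used via its consequences. So there is nothing to compare against; your write-up would serve perfectly well as the missing proof, and each step (well-definedness of $\nxt$ via the form of edges in $\arena\times\mem$, the inductive consistency of $\extend(\rho)$ with $\sigma'$, and the transfer of winning via $\game \le_\mem \game'$) is sound.
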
 

So in particular, if a player has a uniform positional winning strategy for
 $\game'$, then she has a uniform finite-state winning strategy with memory~$\mem$ for $\game$.
\section{Request-Response Games}
\label{sec_rrgames}
  
A request-response game (RR game for short) is denoted by $(\arena, \rrcond)$ where $\arena$ is an arena and $Q_j$ and $P_j$ are subsets of the set of $\arena$'s vertices. A
vertex in $Q_j$ is referred to as a request of the $j$-th condition, while a vertex in $P_j$ is a response for the $j$-th condition. Intuitively,
Player~$0$'s goal is to answer every request by a later visit to a corresponding response. Formally, a play~$\rho_0 \rho_1 \rho_2 \cdots$ is winning for Player~$0$, if for every $j \in \natspref{k}$ and every $n$, if $\rho_n \in Q_j$, then there exists an $n' \ge n$ such that $\rho_{n'} \in P_j$. We say that a request of condition~$j$ is open after a play prefix~$w$, if $w$ contains a vertex in $Q_j$ that is not followed by a vertex in $P_j$.

\begin{xmpl}
\label{ex_rr}
Consider the RR game in Figure~\ref{fig_rr}. At vertex~$q$ Player~$1$ can request either condition $1$ and/or condition~$2$, while at vertex~$p$, Player~$0$ can either answer condition~$1$ or condition~$2$ or none of them. Alternatingly answering condition~$1$ and $2$ is a uniform winning strategy for Player~$0$ from every vertex.

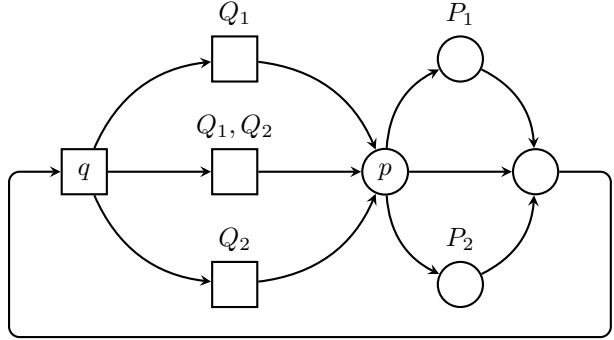
\begin{figure}[h]
	\begin{center}
		\begin{tikzpicture}
		\node[p1] 	at (0,0) 	(s) 	{$q$};
		\node[p1]	at (2,1.5)	(r1)	{};
		\node		at (2,2.1)		{$Q_1$};
		\node[p1]	at (2,-1.5)	(r2)	{};
		\node		at (2,-0.9)		{$Q_2$};
		\node[p1]	at (2,0)	(r12) 	{};
		\node		at (2,.6)		{$Q_1,Q_2$};

		\node[p0]	at (4,0)	(c)		{$p$};
		\node[p0]	at (5,1.5)	(p1)	{};
		\node		at (5,2.1)			{$P_1$};
		\node[p0]	at (5,-1.5) 	(p2)	{};
		\node		at (5,-.9)		{$P_2$};

		\node[p0]	at (6,0)	(e)	{};

		\path
		(s) edge[bend left] (r1)
		(s) edge[bend right] (r2)
		(s) edge(r12)
		(r1) edge[bend left] (c)
		(r2) edge[bend right] (c)
		(r12) edge (c)
		(c) edge[bend left] (p1)
		(c) edge[bend right] (p2)
		(c) edge (e)
		(p1) edge[bend left] (e)
		(p2) edge[bend right] (e);

		\path[draw, thick, rounded corners, -stealth]
		(e.east) -- (7, 0) -- (7,-2.2) -- (-1, -2.2) -- (-1, 0) -- (s.west); 

		\end{tikzpicture}
\end{center}
\caption{The RR game for Example~\ref{ex_rr} and Example~\ref{ex_values}.}
\label{fig_rr}
\end{figure}
\end{xmpl}

There is an intuitive notion of the waiting time between a request and its
(earliest) response, which we formalize in the following. The waiting times
are then aggregated to measure the quality of a play and the quality of a
strategy (both from Player~$0$'s point of view). First, we define the
waiting time for condition~$j$, denoted by $\waittime_j \colon V^* \rightarrow
\nats$, inductively via $\waittime_j(\epsilon) = 0$, and
\[ \waittime_j(wv) = \begin{cases}
0 		&\text{if $\waittime_j(w) = 0$ and $v \notin Q_j \setminus P_j$,}\\
1 		&\text{if $\waittime_j(w) = 0$ and $v \in Q_j \setminus P_j$,}\\
0 		&\text{if $\waittime_j(w) > 0$ and $v \in P_j$,}\\
\waittime_j(w)+1 
		&\text{if $\waittime_j(w) > 0$ and $v \notin P_j$.}\\
\end{cases} \] 

Note that while a request of condition~$j$ is open, additional requests are
ignored, i.e., we are only interested in the waiting time of the earliest
request that is open, but not in the number of requests (of a single
condition) that are open. In \cite{Zimmermann09} an extension of RR games is investigated, where the waiting times take the number of open requests into account as well.  

\begin{rmrk}
\label{rem_wtevolution}
If $\waittime_j(x) \le \waittime_j(y)$, then $\waittime_j(xz) \le
\waittime_j(yz)$ for every $z \in V^*$.
\end{rmrk}

We summarize the waiting times of a play prefix~$w$ in its waiting time vector~$\waitvec(w) = ( \waittime_1(w), \ldots, \waittime_k(w)) \in \nats^k$ and compare such vectors component-wise, i.e.,
$\waitvec(x) \le \waitvec(y)$ if $\waittime_j(x) \le \waittime_j(y)$ for every
$j$.

We say that a strategy~$\sigma$ for Player~$0$ (uniformly) bounds the waiting times for condition~$j$ by $b \in \nats$, if every play prefix~$w$ that starts in $W_0(\game)$ and is consistent with $\sigma$ satisfies $\waittime_j(w) \le b$. If $\sigma$ bounds the waiting times for every condition, then it is a uniform winning strategy.

Now, we use the waiting times to define the quality of plays and strategies from Player~$0$'s point of view: for every $j$ we fix a strictly increasing penalty function~$f_j\colon \nats
\rightarrow \nats$ (which implies that $f_j$ is unbounded) and define the penalty of a play prefix~$w$ for the $j$-th condition
by $\pnlty_j(w) = f_j(\waittime_j(w))$ and the overall penalty of $w$ by
$\pnlty(w) = \sum_{j\in\natspref{k}} \pnlty_j(w)$. We aggregate the penalties
of an infinite play~$\rho$ to the value of this play by taking the limit superior of
the mean accumulated penalties, i.e., we define
\begin{equation*}\val(\rho) = \limsup_{n \rightarrow \infty} \frac{1}{n}\sum_{\ell = 0}^{n-1} 
 \pnlty(\rho_0 \cdots \rho_{\ell}) \enspace.\label{eq_playval}\end{equation*}
Finally, the value of a strategy~$\sigma$ from a vertex~$v$ is 
\[\val(\sigma, v) = \sup\nolimits_{\rho\in\behavior(v, \sigma)} \val(\rho) 
\enspace.\]
Note that we do not parameterize the functions~$\pnlty_j$, $\pnlty$, and $\val$ with the penalty functions~$f_j$, although they depend on them. This is done to improve readability. In the following, we will always ensure that the penalty functions are clear from the context.

\begin{xmpl}
\label{ex_values}
Using the identity function as penalty functions~$f_j$, the uniform winning strategy described in Example~\ref{ex_rr} has value~$\frac{56}{10} $ from every vertex, which is witnessed by Player~$1$ always requesting both conditions every time  when at vertex~$q$. Every play consistent with this strategy and the alternating-response strategy for Player~$0$ ends up in a loop of length~$10$, in which the sum of the waiting times (which are also the penalties) is 56. The value of this play is equal to the length of the loop divided by its length, hence $\frac{56}{10} $. Every other play has a smaller or equal value. Thus, the value of the strategy is also equal to $\frac{56}{10} $, independently of the initial vertex.
\end{xmpl}

It is important to note that we still consider the game as a zero-sum one; we
just associate values to plays and strategies and are interested in optimal
strategies, i.e., a winning strategy~$\sigma$ such that
every other winning strategy~$\sigma'$ satisfies $\val(\sigma',v) \ge
\val(\sigma, v)$ for every vertex~$v$. Note that it is a priori not even clear
whether an optimal strategy exists. 

The sum of penalties $ \sum_{\ell = n}^{n+n'} \pnlty(\rho_0 \cdots \rho_{\ell})$
for a play infix~$\rho_n \cdots \rho_{n+n'}$ with an open request grows (at least) quadratically in $n'$, since the penalty functions~$f_j$ are strictly increasing. Our result on the existence of optimal finite-state strategies relies on this growth, as evidenced by the following example, which shows that optimal finite-state strategies do not necessarily exist if we allow constant penalty functions.
 
\begin{xmpl}
\label{ex_nooptstrat}
Assume we use constant penalty functions~(e.g., $f_j(0)=0$ and $f_j(n) = 1$ for every $n > 0$) to measure the quality of plays and consider the RR game depicted in Figure~\ref{fig_nooptstrat}. Player~$0$ wins from every vertex by traversing both loops infinitely often, which is also necessary to win.

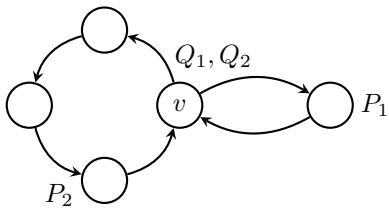
\begin{figure}[h]
	\begin{center}
		\begin{tikzpicture}
			
		\node[p0]		at (0,0) 	(c) 	{$v$};
		\node			at (.45,.65)			{$Q_1, Q_2$};
		\node[p0]		at (2,0) 	(r) 	{};
		\node			at (2.6,0)			{$P_1$};
		\node[p0]		at (-1,1) 	(lu) 	{};
		\node[p0]		at (-2,0) 	(l) 	{};
		\node[p0]		at (-1,-1)	(ld) 	{};
		\node			at (-1.6,-1.2)		{$P_2$};		
		\path
		(c) edge[bend left]		(r)
		(r) edge[bend left]		(c)
		(c) edge[bend right]	(lu)
		(lu)edge[bend right]	(l)
		(l) edge[bend right]	(ld)
		(ld)edge[bend right]	(c)	; 
	 
		\end{tikzpicture}
\end{center}
\caption{The RR game for Example~\ref{ex_nooptstrat}.}
\label{fig_nooptstrat}
\end{figure}
In the following, we only consider plays starting in $v$. As it is Player~$0$'s turn at every vertex, we can identify strategies and plays (and their values are equal). 

If a strategy is finite-state, then its play is ultimately periodic, i.e., of the form $\rho = \rho_0 \cdots \rho_{m-1}(\rho_m \cdots \rho_n)^\omega$, where we assume w.l.o.g.\ $\rho_m = v$. Then, $\val(\rho)$ is equal to the number of positions of the period~$\rho_m \cdots \rho_n$ where condition~$1$ is open plus the number of positions of $\rho_m \cdots \rho_n$ where condition~$1$ is open. 

Now, consider the play infix~$\rho_m' \cdots \rho_n'$ obtained from the period by replacing every visit to the left loop by two visits to the right loop. The value of the play $\rho' = \rho_0 \cdots \rho_{m-1}(\rho_m \cdots \rho_n \rho_m' \cdots \rho_n')^\omega$ is strictly smaller than the value of $\rho$, since visiting the left loop is more costly than visiting the right one twice.

As $\rho'$ can also be generated by a finite-state strategy,  we have shown  that there is no optimal finite-state strategy when considering constant penalty functions.

%
\end{xmpl}

The values~$\val(\rho)$ and $\val(\sigma, v)$ measure the quality of a play and a strategy from Player~$0$'s point of view. However, it is not true that a play (a strategy) is winning for Player~$0$ if, and only if, it has a finite value. One direction holds, as claimed in the next lemma, while the other one can shown to be false by
considering a play in which Player~$0$ allows herself more and more time to
answer the requests.

\begin{lmm}
\label{lemma_finitevalue}
Let $v$ be a vertex, let $\rho$ be a play, and let $\sigma$ be a strategy for Player~$0$.
\begin{enumerate}
\item\label{lemma_finitevalue_play}
If $\val(\rho) < \infty$, then $\rho$ is a winning play for Player~$0$.

\item\label{lemma_finitevalue_strategy}
If $\val(\sigma, v) < \infty$, then $\sigma$ is a winning strategy for
Player~$0$ from~$v$.

\item\label{lemma_finitevalue_region}
If $v \in W_1(\game)$, then $\val(\sigma, v) = \infty$.

\end{enumerate}
\end{lmm}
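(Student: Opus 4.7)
The three parts are tightly linked: part~\ref{lemma_finitevalue_play} does the real work, and parts~\ref{lemma_finitevalue_strategy} and~\ref{lemma_finitevalue_region} are short consequences. The key observation driving part~\ref{lemma_finitevalue_play} is that if a request is left open forever, then the waiting time for that condition eventually grows by one per step; since each $f_j$ is strictly increasing on $\nats$ (hence $f_j(m) \ge f_j(0) + m$ and in particular unbounded), the per-step penalty grows at least linearly, the accumulated sum grows at least quadratically, and its Ces\`aro average diverges.

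For part~\ref{lemma_finitevalue_play} I proceed by contraposition. Assume $\rho = \rho_0 \rho_1 \rho_2 \cdots$ is not winning. Then there exist $j \in \natspref{k}$ and $n_0 \in \nats$ with $\rho_{n_0} \in Q_j$ but $\rho_{n} \notin P_j$ for all $n \ge n_0$. From the inductive definition of $\waittime_j$ one reads off that $\waittime_j(\rho_0 \cdots \rho_n) \ge n - n_0$ for all $n \ge n_0$, since once the waiting time becomes positive it can only be reset by a visit to $P_j$, which never happens after position $n_0$. Using $\pnlty(w) \ge \pnlty_j(w) = f_j(\waittime_j(w))$ and $f_j(m) \ge m$, this gives
\[
\frac{1}{N}\sum_{\ell=0}^{N-1} \pnlty(\rho_0 \cdots \rho_\ell) \;\ge\; \frac{1}{N}\sum_{\ell=n_0}^{N-1} (\ell-n_0) \;=\; \frac{(N-n_0)(N-n_0-1)}{2N}
\]
for $N > n_0$, and the right-hand side tends to $\infty$ with $N$. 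Hence $\val(\rho) = \infty$, contradicting $\val(\rho) < \infty$.

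Part~\ref{lemma_finitevalue_strategy} follows immediately: if $\sigma$ were not winning from $v$, some $\rho \in \behavior(v,\sigma)$ would violate the request-response condition, so by part~\ref{lemma_finitevalue_play} we would have $\val(\rho) = \infty$ and thus $\val(\sigma,v) = \sup_{\rho' \in \behavior(v,\sigma)} \val(\rho') = \infty$. For part~\ref{lemma_finitevalue_region}, if $v \in W_1(\game)$ then Player~$1$ has a winning strategy $\tau$ from $v$; the unique play $\rho$ starting at $v$ that is consistent with both $\sigma$ and $\tau$ is won by Player~$1$, hence is not winning for Player~$0$, hence has $\val(\rho) = \infty$ by part~\ref{lemma_finitevalue_play}, and therefore $\val(\sigma, v) \ge \val(\rho) = \infty$.

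There is no real obstacle; the only step to state carefully is the lower bound $\waittime_j(\rho_0 \cdots \rho_n) \ge n - n_0$, which requires a short case analysis of the definition of $\waittime_j$ (or, alternatively, an appeal to Remark~\ref{rem_wtevolution} comparing the suffix starting at position $n_0$ to the word $\rho_{n_0}$ whose waiting time is $1$).
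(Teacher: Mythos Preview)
Your proposal is correct and follows essentially the same approach as the paper: contraposition for part~\ref{lemma_finitevalue_play} via the observation that an unanswered request forces $\waittime_j$ to grow linearly, hence the penalties $f_j(\waittime_j)\ge \waittime_j$ accumulate quadratically and the Ces\`aro mean diverges; parts~\ref{lemma_finitevalue_strategy} and~\ref{lemma_finitevalue_region} are then derived exactly as you do, the latter by playing $\sigma$ against a winning strategy~$\tau$ for Player~$1$. The only cosmetic differences are notation (your $n_0,N$ versus the paper's $n,n'$) and that the paper leaves the bound $\waittime_j(\rho_0\cdots\rho_{n+n'})\ge n'$ implicit rather than spelling it out as you do.
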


\begin{proof}
(\ref{lemma_finitevalue_play}) 
Consider the contraposition: let $\rho=\rho_0\rho_1\rho_2\cdots$ be winning
for Player~$1$. Then, some condition $j$ is requested at some position~$n$,
but never answered afterwards. Thus, 
$\pnlty_j(\rho_0\cdots\rho_{n+n'}) \ge
f_j(n') \ge n'$ for every $n'$ (recall that $f_j$ is strictly increasing) and
therefore
\[\frac{1}{n+n'}\sum_{\ell=0}^{n+n'-1}\pnlty(\rho_0\cdots\rho_\ell) \ge
\frac{1}{n+n'} \frac{n' (n'-1)}{2} =
\frac{n'-1}{2 \left(\frac{n}{n'}+1\right)}\] 
for all $n'$, which diverges to infinity when $n'$ tends to infinity. Thus, $\val(\rho) = \infty$.

(\ref{lemma_finitevalue_strategy}) 
Again, we consider the contraposition: let $\sigma$ not be a winning strategy
from $v$. Then, there exists a play~$\rho \in \behavior(v, \sigma)$ that is
winning for Player~$1$. Thus, $\val(\rho) = \infty$ as we have just shown and $\val(\sigma, v)=\infty$, too.

(\ref{lemma_finitevalue_region})
Let $\tau$ be a winning strategy for Player~$1$ from $v$ and consider the
unique play $\rho$ that starts in $v$ and is consistent with $\sigma$ and
$\tau$. We have $\val(\rho) = \infty$, as $\tau$ is a winning strategy for
Player~$1$, and therefore $\val(\sigma, v) = \infty$, as $\rho$ is also
consistent with $\sigma$ and starts in $v$.
\end{proof}

To conclude this introductory section on RR games, we recall the proof of finite-state determinacy of
RR games, which proceeds by a reduction to B\"{u}chi games. The strategy
obtained by this reduction yields a first upper bound on the value of an
optimal strategy in an RR game.

The winning condition of a B\"{u}chi game is a set~$F$ of vertices and
Player~$0$ wins a play if it visits $F$ infinitely often. Alternatively, one
can define a B\"{u}chi game to be an RR game with a single condition of the
form~$(V\setminus F, F)$ which is
satisfied if and only if $F$ is visited infinitely often. As B\"{u}chi games
are positionally determined, such a reduction suffices to prove the
following result.

\begin{thrm}[\hspace{-.00001cm}\cite{WallmeierHuttenThomas03}]
\label{theorem_rrdeterminacy}
RR games are determined with finite-state strategies of size~$k
2^{k+1}$, where $k$ denotes the number of RR conditions.
\end{thrm}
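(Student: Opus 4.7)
The plan is to reduce the RR game $\game$ to a B\"uchi game $\game_B$ by designing an appropriate memory structure $\mem = (M, \init, \update)$ for $\arena$, and then invoking positional determinacy of B\"uchi games together with Lemma~\ref{lem_reductiongivesmemory} applied to both players' winning regions to obtain uniform finite-state determinacy.

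I would let the memory state record two pieces of information: the subset $R \subseteq \natspref{k}$ of currently open conditions, and a pointer $j \in \natspref{k}$ paired with a status bit indicating whether the condition currently pointed at has been cleared since the last pointer update. This gives $|M| = k \cdot \pow{k+1}$, matching the stated bound. The update function would, upon entering a vertex $v$, first adjust $R$ by adding each index $j'$ with $v \in Q_{j'} \setminus P_{j'}$ and removing each $j'$ with $v \in P_{j'}$, and would then advance the pointer cyclically through $\natspref{k}$ whenever the condition it currently references is no longer in $R$; the status bit is toggled on each complete pass through $\natspref{k}$.

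The B\"uchi target $F \subseteq V \times M$ would consist of those pairs where the pointer has just completed a full cycle, witnessing that every condition that was open at the start of the cycle has been answered during it. I would then verify $\game \le_\mem \game_B$ where $\game_B = (\arena \times \mem, \{\rho : \rho \text{ visits } F \text{ infinitely often}\})$. The implication from RR-winning to B\"uchi-winning uses that every open request is eventually answered, hence the pointer cannot stall on any fixed condition and so cycles through $\natspref{k}$ infinitely often, forcing infinitely many visits to $F$. The converse uses that between two consecutive visits to $F$ every condition open at the start has been serviced, which inductively yields the conjunctive RR condition along the whole play.

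Once $\game \le_\mem \game_B$ is established, positional determinacy of B\"uchi games gives a uniform positional winning strategy for whichever player wins each vertex $(v, \init(v))$ in $\game_B$, and Lemma~\ref{lem_reductiongivesmemory} transfers this to a uniform finite-state winning strategy of size $k \cdot \pow{k+1}$ in $\game$ for that player; since B\"uchi games are determined, this covers all vertices of $\arena$. The main obstacle I anticipate is calibrating the pointer-advancement and status-bit rules so that the B\"uchi acceptance is \emph{exactly} equivalent to satisfaction of all $k$ RR conditions on projected plays; the temptation is either to let the pointer be skipped past a still-open condition (breaking soundness) or to require too strict a synchronization (breaking completeness), so the correctness argument will hinge on checking this correspondence carefully in both directions.
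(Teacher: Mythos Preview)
Your proposal is correct and follows essentially the same approach as the paper: a memory structure $M = 2^{[k]} \times [k] \times \{0,1\}$ tracking the open-request set, a cyclic pointer that advances only when its current condition is cleared, and a flag bit defining the B\"uchi target, followed by Lemma~\ref{lem_reductiongivesmemory}. The only difference is cosmetic: in the paper the flag is simply set to $1$ whenever the pointer advances (and $0$ otherwise), so the B\"uchi set is ``pointer just moved'' rather than ``pointer completed a full cycle''; this sidesteps the calibration worries you raise, since the pointer eventually stalls on some fixed open condition iff an RR condition is violated, making the equivalence immediate.
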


\begin{proof}
Let $\game = (\arena, \rrcond)$ be an RR game with (w.l.o.g.) $k > 1$
conditions. Consider the memory structure~$\mem = (M, \init, \update)$ with $M
= \pow{\natspref{k}} \times \natspref{k} \times \set{0,1}$, $ \init(v) =
(\set{j \mid v \in Q_j \setminus P_j}, 1, 0)$, and $\update(R, c, f) = (R',
c', f')$ where
\begin{itemize}

\item  $R' = \left( R \cup \set{j \mid v \in Q_j } \right) \setminus \set{j 
\mid v \in P_j }$,

\item $c' = c$ if $c \in R \cap R'$, and $c' = (c \mod k) + 1$ otherwise,
and

\item $f = 1$ if $c' \not= c$, and $f=0$, otherwise.

\end{itemize}
So, $R$ keeps track of open requests, $c$ is a cyclic counter over
$\natspref{k}$ that is incremented every time its current value is not an open
request, and the flag~$f$ is equal to $1$ if and only if $c$ has changed its
value. So, there is an unanswered request if and only if $f$ is equal to $0$
from some point onwards. Thus, consider the set~$F = V \times
(\pow{\natspref{k}} \times \natspref{k} \times \set{1})$. Then, we have $\game
\le_\mem (\arena\times\mem, F)$, i.e., the result follows from
Lemma~\ref{lem_reductiongivesmemory}.
\end{proof}

The upper bound on the memory requirements was slightly lowered and (almost)
matching lower bounds were proven in~\cite{ChatterjeeHenzingerHorn11}.
However, for our purposes, the simple bound presented here is sufficient to obtain
an upper bound on the value of an optimal strategy.

\begin{crllr} \label{corollary_upperboundonval} In every RR game~$\game$,
Player~$0$ has a winning strategy $\sigma$ with $\val(\sigma, v) \le \sum_{j
\in \natspref{k}}f_j(s k  2^k)$ for every $v \in W_0(\game)$, where
$s$ denotes the size of the arena and $k$ the number of RR conditions.
\end{crllr}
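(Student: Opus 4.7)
The plan is to take $\sigma$ to be exactly the finite-state winning strategy furnished by Theorem~\ref{theorem_rrdeterminacy}: it is the lift via $\mem$ of a positional winning strategy $\sigma'$ for Player~$0$ in the B\"{u}chi game $(\arena \times \mem, F)$ with $F = V \times (\pow{\natspref{k}} \times \natspref{k} \times \set{1})$. The whole corollary will follow once one proves the pointwise bound $\waittime_j(w) \le sk 2^k$ for every $j \in \natspref{k}$ and every prefix $w$ of a play consistent with $\sigma$ that starts in $W_0(\game)$: strict monotonicity of each $f_j$ then yields $\pnlty(w) \le \sum_{j \in \natspref{k}} f_j(sk 2^k)$ on every such prefix, and this upper bound survives averaging and the limit superior used in the definition of $\val$, giving the claim.

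The first key step is a gap bound in $\arena \times \mem$. By construction, $F$ collects exactly the configurations in which the flag has just been set, which in turn are precisely the steps at which the cyclic counter~$c$ changes value. Hence between two consecutive visits to $F$ along any play consistent with $\sigma'$ the component~$c$ is constant, and all intermediate non-$F$ configurations have the form $(v, R, c, 0)$. Only $s \cdot 2^k$ such tuples exist, so if one gap contained more than $s \cdot 2^k$ steps, some non-$F$ configuration would repeat; positionality of $\sigma'$ would then let Player~$1$ replay her moves indefinitely from the first repetition, producing an infinite $F$-avoiding suffix consistent with $\sigma'$ and contradicting the fact that $\sigma'$ wins the B\"{u}chi game.

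The second step converts the gap bound into the waiting-time bound using the counter dynamics. Fix $j$ and assume that the currently open request of condition~$j$ was opened at position $n$ with counter value $c_n$. Each subsequent $F$-visit increments the counter by one, so after at most $(j - c_n) \bmod k$ gaps the counter reaches $j$, and then one further gap suffices to answer $j$: the update rule keeps the counter locked at $j$ as long as $j$ is an open request, so the counter can only move away from $j$ once some $P_j$-vertex has been visited. Overall at most $k$ gaps lie between position $n$ and the first subsequent $P_j$-visit, each of length at most $s \cdot 2^k$, giving $\waittime_j(w) \le k \cdot s \cdot 2^k = sk 2^k$ on every prefix $w$.

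The hardest part of the argument, and the one I would spend most care on, is the gap step: one has to notice that inside a gap the effective state space shrinks from $V \times M$ to $V \times \pow{\natspref{k}}$, which is what sharpens the naive bound $|V \times M| = sk 2^{k+1}$ to $s \cdot 2^k$ and lets the final counter-cycle bound come out as $sk 2^k$ rather than a weaker $O(k^2 s 2^k)$. Once the gap bound is in place, the remaining steps—combining the gaps across one counter cycle and passing from the prefix bound on $\pnlty$ via averaging and the limit superior to a bound on $\val(\sigma, v)$, then taking the supremum over $\behavior(v, \sigma)$—are entirely routine.
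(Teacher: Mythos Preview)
Your proposal is correct and reaches the same bound $\waittime_j(w) \le sk2^k$ (and then the value bound) via the same strategy~$\sigma$ as the paper, but through a different decomposition. The paper does not split the analysis into gap bounds and a counter-cycle argument; instead it argues directly by a single pigeonhole: if $\waittime_j(\rho_0\cdots\rho_n) > sk2^k$ then $j$ lies in the open-request component~$R$ of the last $sk2^k+1$ memory states, and since there are only $k2^k$ memory states with $j \in R$ (namely $2^{k-1}$ choices for $R \ni j$, $k$ for $c$, $2$ for $f$) and $s$ vertices, some pair $(\rho_m,\update^*(\rho_0\cdots\rho_m))$ repeats; looping on that repetition yields a play consistent with~$\sigma$ in which $j$ is never answered, contradicting that $\sigma$ is winning. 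Your two-step route---first shrinking the effective state space inside a gap to $s\cdot 2^k$ and then multiplying by at most $k$ counter values traversed before $j$ is serviced---recovers exactly the same product $sk2^k$, and has the merit of explaining \emph{why} the counter mechanism produces this bound; the paper's argument is shorter because it never needs to reason about the counter or the flag beyond noting that they contribute the factor~$2k$ to the state count. Both proofs finish identically by passing from the uniform waiting-time bound through monotonicity of the $f_j$ to the $\limsup$ of the averages.
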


\begin{proof} 
Let $\mem$ be the memory structure defined in the
proof of Lemma~\ref{theorem_rrdeterminacy} and let $\sigma$ be a uniform winning strategy~$\sigma$ for $\game$ with memory $\mem$. We show that $\sigma$ has the desired properties. To this end, we prove that $\sigma$ bounds the waiting time of every condition by $s k 2^k$. Then, we have 
\begin{align*}\val(\rho) =& 
\limsup_{n \rightarrow \infty} \frac{1}{n}\sum_{\ell = 0}^{n-1}
\sum_{j \in \natspref{k}} f_j ( \waittime_j(\rho_0 \cdots \rho_{\ell}))\\
\le &\limsup_{n \rightarrow \infty} \frac{1}{n}\sum_{\ell = 0}^{n-1}
\sum_{j \in \natspref{k}} f_j ( s k 2^k ) \le
\sum_{j \in \natspref{k}} f_j( s k 2^k) \enspace.
\end{align*}
for all $\rho$ starting in $W_0(\game)$ that are consistent with $\sigma$, which implies our claim.

Towards a contradiction, assume we have $\waittime_j(\rho_0 \cdots \rho_n) > s
 k  2^k$. Then, $j$ is in the first component (which keeps track of
open requests) of the last $s k  2^k +1$ memory states reached
during $\rho_0 \cdots \rho_n$. But there are only $k 2^k$ memory states
that contain $j$ in the first component. Hence, there are positions $m < m'$
in this interval such that $\rho_m = \rho_{m'}$ and $\update^* (\rho_0 \cdots
\rho_m) = \update^* (\rho_0 \cdots \rho_{m'})$.

Now, consider the play $\rho' = \rho_0 \cdots \rho_{m-1} (\rho_{m} \cdots
\rho_{m'-1})^\omega$ obtained by repeating the loop between positions~$m$ and $m'$, which is also in $\behavior(W_0(\game), \sigma)$. But $\rho'$
contains an unanswered request, since condition~$j$ is open at $\rho_m$ and never answered during $\rho_m \cdots \rho_{m'-1}$. This contradicts the fact that $\sigma$ is winning from $W_0(\game)$.
\end{proof}

The exponential upper bound on the waiting times used in the proof of Corollary~\ref{corollary_upperboundonval} gives a correction to a claim of~\cite{HornThomasWallmeier08, Zimmermann09} where the bound~$sk$ is used. Next, we give an example showing a matching exponential lower bound.

\begin{xmpl}
Consider the RR game depicted in Figure~\ref{fig_explowerboundonwt}. We consider plays starting at vertex~$i$, where all four RR conditions are requested. From there, the only move leads to the hub vertex~$h$, where Player~$0$ has to move into one of the four blades, each one of them associated with one of the RR conditions. The first vertex of the blade for condition~$j \in [4]$ ($c_j$ in the figure) is in $P_j$, i.e., condition~$j$ is responded to. From this vertex, Player~$1$ can either move to a sink vertex (called $s_j$) where every condition with index larger than $j$ is answered, too, or he can move to a vertex (called $v_j$) where all conditions with index smaller than $j$ are requested again. From this vertex, the only move leads back to the hub. Due to the existence of the former move, Player~$1$ can win if the $j$-th blade is entered while a request of a condition with smaller index is open. 

\begin{figure}[h]
\begin{center}
\begin{tikzpicture}

\begin{scope}
\node[p0] 				at (0,0) 	(c) 	{$h$};

\node[p0]				at (0,-4)	(i)		{$i$};
\node[anchor = west]	at (.3,-4)	() 		{$Q_1, Q_2, Q_3, Q_4$};

\node[p1]				at (1,1)	(11)	{$c_1$};
\node[anchor = east]	at (.7,1)	()		{$P_1$};
\node[p1]				at (2,2)	(12)	{$s_1$};
\node[anchor = east]	at (1.7,2)	()		{$P_2, P_3, P_4$};
\node[p1]				at (3,3)	(13)	{$v_1$};
\node[anchor = east]	at (2.7,3)	()		{$ $};

\node[p1]				at (1,-1)	(21)	{$c_2$};
\node[anchor = west]	at (1.3,-1)	()		{$P_2$};
\node[p1]				at (2,-2)	(22)	{$s_2$};
\node[anchor = west]	at (2.3,-2)	()		{$P_3, P_4$};
\node[p1]				at (3,-3)	(23)	{$v_2$};
\node[anchor = west]	at (3.3,-3)	()		{$Q_1$};

\node[p1]				at (-1,-1)	(31)	{$c_3$};
\node[anchor = west]	at (-.7,-1)	()		{$P_3$};
\node[p1]				at (-2,-2)	(32)	{$s_3$};
\node[anchor = west]	at (-1.7,-2)()		{$P_4$};
\node[p1]				at (-3,-3)	(33)	{$v_3$};
\node[anchor = west]	at (-2.7,-3)()		{$Q_1, Q_2$};

\node[p1]				at (-1,1)	(41)	{$c_4$};
\node[anchor = east]	at (-1.3,1)	()		{$P_4$};
\node[p1]				at (-2,2)	(42)	{$s_4$};
\node[anchor = east]	at (-2.3,2)	()		{$ $};
\node[p1]				at (-3,3)	(43)	{$v_4$};
\node[anchor = east]	at (-3.3,3)	()		{$Q_1, Q_2, Q_3$};

\end{scope}

\path[thick]
(i)			edge				(c)
(c)			edge				(11)	
(11)		edge				(12)
(11)		edge[bend right]	(13)
(13.south)	edge[bend left]		(c)
(12)		edge[loop above]	()

(c)			edge				(21)
(21)		edge				(22)
(21)		edge[bend right]	(23)
(23.west)	edge[bend left]		(c)
(22)		edge[loop above]	()

(c)			edge				(31)
(31)		edge				(32)
(31)		edge[bend right]	(33)
(33.north)	edge[bend left]		(c)
(32)		edge[loop below]	()

(c)			edge				(41)
(41)		edge				(42)	
(41)		edge[bend right]	(43)
(43.east)	edge[bend left]		(c)
(42)		edge[loop below]	();

\end{tikzpicture}
\caption{An RR game with exponential waiting times.}
\label{fig_explowerboundonwt}
\end{center}
\end{figure}

Player~$0$ has a winning strategy for this RR game from vertex~$i$ by always moving to the blade of the smallest open condition. This strategy takes $2^4-1$ visits to the hub to answer the request of condition~$4$, since every smaller condition is requested after answering the smallest open condition. Once all requests are answered, Player~$0$ can always move to the first blade, which does not generate new requests. Deviating from this strategy either generates additional requests (if moving to a blade of a condition with smaller index than the currently smallest open one) and thereby prolongs the time it takes to answer condition~$4$, or allows Player~$1$ to move to a sink vertex where he wins (if moving to a blade of a condition with larger index than the currently smallest open one). Thus, every winning strategy from vertex~$i$ for Player~$0$ in this game takes at least $2^4-1$ visits to the hub before condition~$4$ is answered.

This game can be generalized by having $k$ conditions and $k$ blades. Then, it takes at least $2^k-1$ visits to the hub to answer the request of condition~$k$. Thus, the waiting time for condition~$k$ is larger than $2^k$ before it is eventually answered. 
\end{xmpl}
\section{Bounding the Waiting Times in RR Games}
\label{sec_bounding}

In this section, we show that for every strategy whose value is small from every vertex in $W_0(\game)$ there is a strategy with smaller or equal values that additionally bounds the waiting times by some bound~$b$, which only depends on the size of the arena and the number of RR conditions. This restricts the search space for optimal strategies to a finite one (in terms of possible waiting time vectors). In the next section, this space is turned into an arena for a mean-payoff game. Intuitively, the arena tracks plays of the RR game and their waiting times up to the threshold~$b$. The value of a play in the mean-payoff game is the value of the tracked play in the original game. Thus, an optimal strategy for Player~$0$ in the mean-payoff game (which can be effectively computed) can be turned into an optimal winning strategy for Player~$0$ in the RR game.

We proceed as follows: in Subsection~\ref{subsec_dickson}, we derive the bound~$b$ and in Subsection~\ref{subsec_bounding} we show that we can turn every strategy of small value into a strategy of smaller or equal value whose waiting times are bounded by $b$.

\subsection{Dickson's Lemma for Waiting Times}
\label{subsec_dickson}

Given a strategy~$\sigma$ with small values for $W_0(\game)$ we need to construct a strategy with smaller or equal values that also bounds the waiting times by a constant that only depends on the number of vertices and RR conditions and the penalty functions. We achieve this by removing loops from plays in which the waiting time is high for some condition. However, this might have an effect on the waiting times for other conditions as well: in the worst case we might remove an answer to a request, thereby increasing the waiting time or even generating a losing play. To avoid this, we only remove a loop if the waiting time vector at the end of the loop is larger than at the beginning. 

This removal process is iterated until ad infinitum, i.e., in the limit there are no more such loops. Hence, our bound~$\bnd$ has to be an upper bound on the length of play infixes without such a loop. We derive $\bnd$ in this subsection by giving a quantitative version of Dickson's Lemma~\cite{Dickson13}, which states that there is no infinite play prefix without such a loop. However, the lemma does not give an explicit bound on the length of a play without such a loop. Indeed, if we allow arbitrary vectors of natural numbers (this is the setting of Dickson's Lemma), there are arbitrarily long sequences. But by exploiting the simple update-rule of the waiting times -- increment or reset -- we are able to obtain a doubly-exponential bound~$\bnd$.

Let $\rho$ be a play of the RR game~$\game$. We say that a pair of positions~$(n_1,n_2)$ of $\rho$ with $n_1 <  n_2$ is dickson, if we have $\rho_{n_1} = \rho_{n_2}$ and $\waitvec(\rho_0 \cdots \rho_{n_1}) \le \waitvec(\rho_0 \cdots \rho_{n_2})$. Note that the notion is defined with respect to the whole play prefix~$\rho_0 \cdots \rho_{n_2}$, since the waiting times are computed starting at the first position of $\rho$. An infix is dickson, if it contains a dickson pair of positions, otherwise it is non-dickson.

The goal of this section is to define a function~$\bnd \colon \nats \times \nats \rightarrow \nats$ such the following is true for every RR game with $s$ vertices and $k$ RR conditions: every play infix of length at least $\bnd(s,k)$ has a dickson pair. Note that this implies that we have to deal with arbitrarily high waiting times at the beginning of the infix. 

We define $\bnd$ by induction over $k$, the number of RR conditions. For $k=0$, we  have $\bnd(s, k) = s + 1$, since every state repetition yields a
dickson pair. Now, consider a game with $k > 0$ RR conditions. We begin by stating a lemma that restricts the combinations of
values that can appear in waiting time vectors in a non-dickson infix: intuitively, not too many waiting times can be large at the same time, since this would imply the existence of a dickson pair in the remaining conditions.  This is also a dickson pair for all conditions, since the large values only increase between these two positions.

\begin{lmm}
Let $\rho_m \cdots \rho_{m+\ell}$ be a non-dickson play infix of a play $\rho$ in an RR game with $s$ vertices and $k$ RR conditions. For
every $j$ in the range $0 \le j \le k-1$ and every $n$ in the range $\bnd(s,
k-(j+1)) \le n \le \ell$, $\waitvec(\rho_0 \cdots \rho_{m+n})$ contains at most $j$ entries that are larger than $\bnd(s, k-(j+1))$.
\end{lmm}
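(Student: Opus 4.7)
The plan is to argue by contraposition, using the inductive definition of $\bnd$ on $k$: assuming that $\bnd(s, k')$ has already been defined for every $k' < k$ with the property that any play infix of length at least $\bnd(s, k')$ in a game with $k'$ conditions contains a dickson pair, I suppose towards a contradiction that some position $m+n$ with $n \ge \bnd(s, k-(j+1))$ carries a waiting-time vector with at least $j+1$ coordinates strictly larger than $\bnd(s, k-(j+1))$, and I aim to exhibit a dickson pair inside $\rho_m \cdots \rho_{m+\ell}$.

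Write $b = \bnd(s, k-(j+1))$ for brevity and let $L \subseteq \natspref{k}$ collect the indices of the ``large'' coordinates, so $|L| \ge j+1$; fix any $L' \subseteq L$ with $|L'| = j+1$. The crucial observation is that $\waittime_c$ either resets to $0$ or grows by exactly one at each step (see Remark~\ref{rem_wtevolution} and the defining recursion), so $\waittime_c(\rho_0 \cdots \rho_{m+n}) > b$ forces the last reset of condition~$c$ to have happened at a position preceding $m+n-b$. Consequently, for every $c \in L'$ the value $\waittime_c(\rho_0 \cdots \rho_{p})$ is strictly monotonically increasing (by exactly one at each step) as $p$ ranges over $m+n-b, m+n-b+1, \ldots, m+n$.

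Next I restrict attention to the RR game obtained from $\arena$ by keeping only the $k-(j+1)$ conditions $(Q_c, P_c)_{c \in \natspref{k} \setminus L'}$; the waiting times of these conditions in the sub-game are computed identically to those in the original game. Because $n \ge b$ and $n \le \ell$, the sub-infix $\rho_{m+n-b} \cdots \rho_{m+n}$ lies inside the given infix and has length $b+1 \ge \bnd(s, k-(j+1))$. By the inductive hypothesis applied to this sub-game, the sub-infix contains a dickson pair~$(n_1, n_2)$, i.e., $\rho_{n_1} = \rho_{n_2}$ and $\waittime_c(\rho_0 \cdots \rho_{n_1}) \le \waittime_c(\rho_0 \cdots \rho_{n_2})$ for every $c \notin L'$. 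For $c \in L'$, the strict monotonicity from the previous paragraph yields $\waittime_c(\rho_0 \cdots \rho_{n_1}) < \waittime_c(\rho_0 \cdots \rho_{n_2})$, and combining the two estimates turns $(n_1, n_2)$ into a dickson pair of the original game, contradicting the non-dicksonness of $\rho_m \cdots \rho_{m+\ell}$.

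I expect the main obstacle to be the calibration of the threshold: the number $b$ must simultaneously be small enough that a single coordinate exceeding $b$ forces $b$ consecutive steps of strict growth on that coordinate, and large enough that any sub-infix of length $b+1$ in a game with $k-(j+1)$ conditions is already forced to contain a dickson pair. The choice $b = \bnd(s, k-(j+1))$ is exactly what makes both requirements hold at once, so the real work reduces to checking that the inductive definition of $\bnd$ supplies precisely the bound needed and that the strict increase on $L'$ exactly compensates for ignoring those coordinates in the sub-game.
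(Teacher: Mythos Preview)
Your argument is correct and follows essentially the same route as the paper's own proof: assume at least $j+1$ coordinates exceed $\bnd(s,k-(j+1))$ at position $m+n$, observe that those coordinates were strictly increasing throughout the preceding window of length $\bnd(s,k-(j+1))$ (which lies inside the infix because $n \ge \bnd(s,k-(j+1))$), apply the inductive property of $\bnd$ to the remaining $k-(j+1)$ coordinates to obtain a dickson pair in the projection, and then lift it back to all $k$ coordinates. The only cosmetic differences are that you explicitly pass to a subset $L'$ of size exactly $j+1$ and phrase the appeal to $\bnd(s,k-(j+1))$ as an inductive hypothesis rather than ``by definition of $\bnd$'', but these are the same thing given how $\bnd$ is constructed.
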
 

\begin{proof}
\label{lem_valuesinnondicksoninfix}
Towards a contradiction, assume there is a $j$ such that $\waitvec(\rho_0 \cdots \rho_{m+n})$ contains $j+1$ entries that are larger than $\bnd(s, k-(j+1))$, where $\bnd(s, k-(j+1)) \le n \le m + \ell$. We denote the set of coordinates of these entries in the waiting time vectors by $J$, i.e., $J \subseteq \natspref{k}$. The entries at the coordinates in $J$ are updated by increasing them during the last $\bnd(s, k-(j+1))$ positions before $m+n$, which are all positions contained in the infix. 

Now, consider the projection to the $k-(j+1)$ coordinates not in $J$: there is a dickson pair in the infix~$\rho_{m + n + 1 -\bnd(s, k-(j+1))} \cdots \rho_{m +n}$, as it has length~$\bnd(s, k-(j+1))$. This is also a dickson pair when considering all coordinates, since the values at the coordinates in $J$ are strictly increasing during this infix. This yields the desired contradiction.
\end{proof}	
	
Due to the previous lemma, after $\bnd(s, k-1)$ positions in a non-dickson infix,
every vector has no entry larger than $\bnd(s,k-1)$, at most one entry larger than
$\bnd(s, k-2)$, at most two entries larger than $\bnd(s, k-3)$, and in general, at
most $j$ entries larger than $d(s, k-(j+1))$ for every $j \in \set {0, \ldots, k-1}$.
Rephrasing this, we obtain that every such vector contains an entry smaller than
$\bnd(s, 0)$, another entry smaller than $\bnd(s, 1)$, another entry smaller than
$\bnd(s, 2)$, and so on. The number of such vectors is bounded by $k!  \prod_{j=0}^{k-1}
\bnd(s,j)$. So, we can define for $k>0$ \[  \bnd(s, k)  = \left( \bnd(s,k-1) +  s  k! 
\prod_{j=0}^{k-1} \bnd(s,j)\right) +1 \enspace.\]
The first summand is due to the fact that the bounds only hold after $\bnd(s,k-1)$
steps, and the factor~$s$ in the second summand takes account of the fact that we need a state repetition in a dickson-pair as well.

\begin{lmm}
Let $\game$ be an RR game with $s$ vertices and $k$ RR conditions.
\begin{enumerate}
\item\label{lem_bound_correct} Every play infix of length~$b(s,k)$ has a dickson pair.
\item\label{lem_bound_ub} We have $ b(s,k) \in \mathcal{O}(2^{2^{s \cdot k + 2}}) $.
\end{enumerate}
\end{lmm}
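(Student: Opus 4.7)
The plan is to establish part~(\ref{lem_bound_correct}) by induction on~$k$, combining the preceding lemma with a pigeonhole argument, and to establish part~(\ref{lem_bound_ub}) by unrolling the recurrence in the definition of~$\bnd$.

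For part~(\ref{lem_bound_correct}), the base case $k=0$ is immediate: among $s+1 = \bnd(s,0)$ consecutive positions two must share a vertex, and the empty waiting time vectors are trivially comparable. For the inductive step, consider an infix $\rho_m \cdots \rho_{m+\ell}$ with $\ell+1 = \bnd(s,k)$ and assume for contradiction that it is non-dickson. Since $\bnd(s,k-1)$ is the maximum of the thresholds $\bnd(s,k-(j+1))$ for $0 \leq j \leq k-1$, the preceding lemma applies simultaneously for every~$j$ at every position index~$n$ with $\bnd(s,k-1) \leq n \leq \ell$; in particular, when the vector $\waitvec(\rho_0 \cdots \rho_{m+n})$ is sorted in increasing order its $j$-th entry is strictly smaller than $\bnd(s,j)$. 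There are at most $k!\prod_{j=0}^{k-1} \bnd(s,j)$ such vectors and hence at most $s \cdot k!\prod_{j=0}^{k-1} \bnd(s,j)$ distinct vertex--vector pairs. The tail of the infix has length $\ell + 1 - \bnd(s,k-1) = s \cdot k!\prod_{j=0}^{k-1} \bnd(s,j) + 1$ by the definition of $\bnd(s,k)$, so the pigeonhole principle yields two positions in the tail sharing both vertex and waiting time vector, contradicting the non-dickson assumption.

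For part~(\ref{lem_bound_ub}), observe that for $k \geq 1$ the middle term dominates in the definition of $\bnd(s,k)$, so $\bnd(s,k) \leq 2 s \cdot k!\prod_{j=0}^{k-1} \bnd(s,j)$. Writing $l_k = \log_2 \bnd(s,k)$ and subtracting the resulting inequality for $k-1$ from the one for $k$ yields the simpler recurrence $l_k \leq 2 l_{k-1} + \log_2 k + O(1)$. Unrolling gives $l_k \leq 2^k l_0 + \sum_{j=1}^{k} 2^{k-j}(\log_2 j + O(1)) \leq 2^k (l_0 + C)$ for some absolute constant~$C$, because $\sum_{j \geq 1} 2^{-j} \log_2 j$ converges. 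With $l_0 = \log_2(s+1)$ this gives $\log_2 \bnd(s,k) = O(2^k \log(s+1))$, which in turn is $\mathcal{O}(2^{sk+2})$ for all $s, k \geq 1$; exponentiating twice yields the claimed $\mathcal{O}(2^{2^{sk+2}})$ bound.

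The only subtle point is the offset by $\bnd(s,k-1)$ in part~(\ref{lem_bound_correct}): the preceding lemma only constrains waiting time vectors starting from that position, which is exactly why this offset appears in the recurrence for $\bnd$. The asymptotic analysis in part~(\ref{lem_bound_ub}) is routine once one notices that the recurrence for $\log_2 \bnd(s,k)$ is essentially doubling and that the inhomogeneous term $\log_2 k$ contributes only a bounded additive constant after the $2^k$-fold unrolling.
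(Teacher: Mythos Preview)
Your argument for part~(\ref{lem_bound_correct}) matches the paper's: both apply the preceding lemma to bound the shape of the waiting time vectors after the first $\bnd(s,k-1)$ positions, count vertex--vector pairs, and invoke pigeonhole. The paper is terser, but the content is the same.

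For part~(\ref{lem_bound_ub}) you take a different and arguably cleaner route. The paper proves by explicit induction the closed form
\[
\bnd(s,k) \le 2^{2^{k-1}} (s+1)^{2^k} k! \prod_{j=1}^{k-1}(j!)^{2^{k-(j+1)}}
\]
and then argues this is $\mathcal{O}(2^{2^{sk+2}})$. You instead pass to $l_k = \log_2 \bnd(s,k)$, derive the doubling recurrence $l_k \le 2l_{k-1} + \log_2 k + O(1)$, and unroll. This avoids the somewhat heavy product manipulation and makes the doubly-exponential growth transparent.

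One point needs care: the step ``subtracting the resulting inequality for $k-1$ from the one for $k$'' is not valid as written, since both are \emph{upper} bounds and you cannot subtract two inequalities of the same direction. What makes the recurrence hold is that the middle term also gives a \emph{lower} bound: from the definition, $\bnd(s,k-1) > s(k-1)!\prod_{j=0}^{k-2}\bnd(s,j)$, hence $l_{k-1} > \log_2 s + \log_2(k-1)! + \sum_{j=0}^{k-2} l_j$. Subtracting this lower bound from your upper bound $l_k \le 1 + \log_2 s + \log_2 k! + \sum_{j=0}^{k-1} l_j$ legitimately yields $l_k \le 2l_{k-1} + \log_2 k + 1$. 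Equivalently, one can divide $\bnd(s,k) \le 2s\,k!\prod_{j=0}^{k-1}\bnd(s,j)$ by the lower bound for $\bnd(s,k-1)$ to get $\bnd(s,k) \le 2k\,\bnd(s,k-1)^2$ directly. Once this is said, the remainder of your unrolling is correct.
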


\begin{proof}(\ref{lem_bound_correct}) This follows directly from Lemma~\ref{lem_valuesinnondicksoninfix} and from the arguments presented after it.

(\ref{lem_bound_ub}) 
We show $\bnd(s,k) \le 2^{2^{k-1}} (s+1)^{2^k} k! \prod_{j=1}^{k-1}(j!)^{2^{k-(j+1)}}$  for every $k>0$, which implies the claim. 
Throughout the proof we use the following upper bound 
\[ \bnd(s, k)  = \left( \bnd(s,k-1) +  s  k! 
\prod_{j=0}^{k-1} \bnd(s,j)\right) +1 
\le 2 (s+1) k! \prod_{j=0}^{k-1} \bnd(s,j)
\enspace\]
for $k>0$, which also proves our claim for $k=1$. Now, consider $k>1$. We have 
\begin{align*}
b(s,k) & \le 2 (s+1) k! \prod_{j=0}^{k-1} \bnd(s,j) \\
& = 2 (s+1)^2 k! \prod_{j=1}^{k-1} \bnd(s,j) \\
& \le  2 (s+1)^2 k! \prod_{j=1}^{k-1} \left( 2^{2^{j-1}} (s+1)^{2^j} j! \left( \prod_{j'=1}^{j-1}(j'!)^{2^{j-(j'+1)}} \right)\right)\\
& =  2^{1+\sum_{j=1}^{k-1} {2^{j-1}}} \cdot (s+1)^{2+\sum_{j=1}^{k-1} {2^{j}}} \cdot k! \prod_{j=1}^{k-1} j! \prod_{j'=1}^{j-1}(j'!)^{2^{j-(j'+1)}}\\
& =  2^{1+\sum_{j=0}^{k-2} {2^{j}}} \cdot (s+1)^{1+\sum_{j=0}^{k-1} {2^{j}}} \cdot  k! \prod_{j=1}^{k-1}(j!)^{2^{k-(j+1)}}\\
& =  2^{2^{k-1}} \cdot (s+1)^{2^k} \cdot k! \prod_{j=1}^{k-1}(j!)^{2^{k-(j+1)}}\enspace.\\
\end{align*}
The equality $ \prod_{j=1}^{k-1} j! \prod_{j'=1}^{j-1}(j'!)^{2^{j-(j'+1)}} = \prod_{j=1}^{k-1}(j!)^{2^{k-(j+1)}}$ used in the second-to-last equality can be shown by a straightforward induction.
\end{proof}

Recently, Czerwi\'nski  et al.\ complemented our doubly-exponential upper bound by proving a doubly-exponential lower bound of $2^{2^{k/2}}$~\cite{CzerwinskiGK14}.
\subsection{Strategies with Small Values and Bounded Waiting Times}
\label{subsec_bounding}

In this subsection, we show how to turn a strategy with small values from every vertex in $W_0(\game)$ into a strategy of smaller or equal values whose waiting times are bounded. To this end, we remove loops of plays in which the waiting time for some condition~$j$ is large. By doing this ad infinitum, we obtain a limit strategy with the desired properties.

Throughout this subsection, we fix an RR game 
$\game = (\arena, \rrcond)$ with $\arena = (V, V_0, V_1, E)$ and $|V| = s$ as well as a penalty function~$f_j$ for every condition~$j$. The goal of this section is to prove the following lemma, which shows that for every strategy of
small value there is a strategy of smaller or equal value that additionally bounds the waiting times. In particular, the result applies to the uniform winning strategy for Player~$0$ from Corollary~\ref{corollary_upperboundonval} which satisfies 
\[\val(\sigma, v) \le \sum_{j \in \natspref{k}}f_j(s k 2^k) =\mathrel{\mathop:} \valgame\] for every $v \in W_0(\game)$.

\begin{lmm}
\label{lemma_bounding}
Let $\sigma$ be a strategy such that $\val(\sigma, v) \le \valgame$ for
every $v \in W_0(\game)$. There is a strategy~$\sigma'$ with $\val(\sigma', v)
\le \val(\sigma, v)$ for every $v \in V$ that uniformly bounds the waiting times for every condition~$j$ by $f^{-1}_j(\valgame) + \bnd(s,k-1)$.
\end{lmm}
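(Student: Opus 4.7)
The plan is to construct $\sigma'$ by having it ``trace'' plays of $\sigma$ and excise loops whenever a waiting time is about to exceed the target bound $B_j := f_j^{-1}(\valgame) + \bnd(s, k-1)$. Concretely, I would maintain for every play prefix $w$ consistent with $\sigma'$ starting at some vertex $v$ a reference prefix $\psi(w)$ that is a prefix of a play of $\sigma$ from $v$ with $\last(\psi(w)) = \last(w)$ and the componentwise bound $\waitvec(w) \le \waitvec(\psi(w))$. Player~$0$'s move under $\sigma'$ at $w$ is then defined by $\sigma'(w) := \sigma(\psi(w))$. For $v \in W_1(\game)$ the value inequality is trivial by Lemma~\ref{lemma_finitevalue}(\ref{lemma_finitevalue_region}), so the interesting case is $v \in W_0(\game)$.

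The map $\psi$ is built by induction on $|w|$. The base case is $\psi(v) = v$. For the step, suppose $\psi(w)$ satisfies the invariants and the play extends to $wv'$ by the vertex $v'$ chosen either by $\sigma'$ (using the formula above) or by Player~$1$. The naive candidate $\psi(w)\cdot v'$ remains a prefix of a play of $\sigma$ and continues to satisfy both invariants for $wv'$ (the second by Remark~\ref{rem_wtevolution}). If every $\waittime_j(wv') \le B_j$, we set $\psi(wv') := \psi(w)\cdot v'$. Otherwise some $\waittime_j(wv') > B_j$, which by the domination invariant forces $\waittime_j(\psi(w)\cdot v') > B_j$; hence condition $j$ stays open across the last $\bnd(s,k-1)+1$ positions of $\psi(w)\cdot v'$. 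Applying Lemma~\ref{lem_bound_correct} to the projection onto the remaining $k-1$ conditions on a sub-infix of length $\bnd(s,k-1)$ yields a dickson pair there, and since $j$'s waiting time strictly increases throughout the tail, this pair is dickson for all $k$ conditions. Excising the corresponding loop shortens $\psi(w)\cdot v'$ (with the last vertex still $v'$), and iterating the excision until every condition respects $B_j$ produces the new $\psi(wv')$.

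The main obstacle I foresee is that naively excising a loop from $\psi(w)\cdot v'$ can destroy consistency with $\sigma$: at the cut point $\sigma$ originally committed to the move leading \emph{into} the removed loop, not to the move that now appears in the shortened prefix. The resolution is to choose the dickson pair so that the cut falls on a Player~$1$ vertex (whose outgoing move is unconstrained by $\sigma$) or, more robustly, to observe that the shortened prefix is still a prefix of \emph{some} play of $\sigma$ from $v$: one in which Player~$1$'s earlier moves are chosen so as to skirt the loop, forcing $\sigma$ to produce the desired exit move at the cut point. A parallel subtlety is preserving the domination invariant $\waitvec(wv') \le \waitvec(\psi(wv'))$ after excision; one must verify componentwise that the waiting times of the shortened reference never fall below those of $wv'$, using the dickson property and Remark~\ref{rem_wtevolution} condition by condition, and in particular using that the offending condition $j$'s waiting time in $\psi$ remains above $\waittime_j(wv')$ because the excised loop kept $j$ open throughout.

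Once $\psi$ is constructed, the waiting-time bound for $\sigma'$ is immediate by design. For the value inequality, I would take any play $\rho'$ of $\sigma'$ from $v$ and reassemble the sequence $(\psi(\rho'|_n))_n$ -- by reinserting every excised loop -- into a single infinite play $\tilde\rho$ of $\sigma$ from $v$. The componentwise domination, combined with the observation that each excised loop carries strictly positive penalty stemming from the offending condition (whose waiting time strictly grew across the loop), shows that the limsup-average defining $\val(\rho')$ is bounded above by the one defining $\val(\tilde\rho)$, whence $\val(\sigma', v) \le \val(\tilde\rho) \le \val(\sigma, v)$.
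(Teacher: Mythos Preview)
Your proposal has two genuine gaps, and both correspond to subtleties that forced the paper to take a rather different route.

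\textbf{Consistency after excision.} Excising a loop from the interior of the $\sigma$-consistent prefix~$\psi(w)\cdot v'$ does \emph{not} in general yield a $\sigma$-consistent prefix. If the cut point~$\psi_{n_1}$ is a Player~$0$ vertex, then $\sigma(\psi_0\cdots\psi_{n_1})=\psi_{n_1+1}$ is the move into the loop, not the move~$\psi_{n_2+1}$ that appears after excision; the shortened word is simply not in~$t(\sigma)$. Your first fix (cut on a Player~$1$ vertex) does not help: the very next Player~$0$ move after the cut is determined by the new, shorter history, and need not equal the move $\sigma$ made after the longer original history. Your second fix (``Player~$1$'s earlier moves are chosen so as to skirt the loop'') changes the prefix~$\psi_0\cdots\psi_{n_1}$ itself, so you are no longer talking about the excised word at all. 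The paper avoids this entirely by going the other direction: the map~$h$ from the new tree to the old one \emph{lengthens} (skips ahead to the longest suitable element of~$S_j(t,h(w)v)$) rather than shortens, so one never leaves~$t(\sigma)$.

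\textbf{The waiting-time bound is not ``immediate by design''.} Your construction checks $\waittime_j(wv')>B_j$ on the $\sigma'$-play and then modifies the \emph{reference} $\psi$. But $wv'$ is already a play of~$\sigma'$; nothing you do to~$\psi$ retroactively lowers $\waittime_j(wv')$. Moreover, shortening~$\psi(w)v'$ can only \emph{decrease} its waiting-time vector (by Remark~\ref{rem_wtevolution} applied to the dickson pair), which jeopardises the domination invariant $\waitvec(wv')\le\waitvec(\psi(wv'))$ rather than helping it. The paper does not obtain the bound in one pass either: it applies the skip-ahead operator~$I_j$ \emph{infinitely often}, shows the sequence of trees stabilises on each finite prefix, and proves the bound only for the limit tree~$t_\omega$ via a fixpoint argument (if a long unanswered stretch survived in~$t_\omega$, the longest-element rule in case~(2) of~$h_n$ would still move, contradicting stabilisation). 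The paper explicitly notes that this infinite iteration is a correction to the conference version, and that whether a single application of~$I_j$ already bounds the waiting times is open. Your proposal is effectively a single-pass construction and falls into the same trap.
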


Note that $\val(\sigma', v) \le \val(\sigma, v) \le \valgame$ for
every $v \in W_0(\game)$ implies that $\sigma$ and $\sigma'$ are uniform winning strategies.

In this subsection, it is convenient to view a strategy as the set of play prefixes that are consistent with it. This representation simplifies the process of removing loops from the plays which are consistent with the strategy. Also, we only consider plays starting in $W_0(\game)$ since we need to bound the waiting times for such plays, the waiting times for plays starting in $W_1(\game)$ cannot be bounded and are ignored.

Formally, a strategy tree is a prefix-closed language~$t \subseteq V^*$ such that the following conditions are satisfied: 
\begin{enumerate}

\item For every $w_0 \cdots w_n \in t$ we have $w_0 \in W_0(\game) $ and $(w_{n'}, w_{n'+1}) \in E$ for every $n' < n$ (only play prefixes starting in $W_0(\game)$ are in $t$).

\item $t \cap V = W_0(\game)$ (every initial vertex from $W_0(\game)$ is in $t$, but no initial vertex from $W_1(\game)$).

\item For every $w_0 \cdots w_n \in t$ with $w_n \in V_0$ there is a unique $v \in V$ such that $w_0 \cdots w_n v \in t$ (there is a unique successor in $t$ for play prefixes ending in $V_0$).

\item For every $w_0 \cdots w_n \in t$ with $w_n \in V_1$ and every successor~$v$ of $w_n$ in $\arena$ we have $w_0 \cdots w_n v \in t$ (all successors are in $t$ for play prefixes ending in $V_1$).

\end{enumerate}

Every strategy~$\sigma$ can be turned into a strategy tree~$t(\sigma)$ containing exactly the prefixes (including the empty prefix~$\epsilon$) of plays that start in $W_0(\game)$ and are consistent with $\sigma$. Vice versa, every strategy tree~$t$ defines a strategy~$\sigma(t)$ mapping $w$ ending in $V_0$ to the unique vertex~$v$ with $wv \in t$. Note that this strategy is only defined for play prefixes starting in Player~$0$'s winning region that are consistent with $\sigma$. However, this is sufficient for our purposes, since $\sigma(t)(w)$ can be defined arbitrarily for every other $w$.

Given a strategy tree~$t$ and $w \in t$, define \[S_j(t, w) = \set{w w' \in t \mid w' \in (V \setminus P_j)^*} \]
to be the set of continuations of $w$ in which no vertex from $P_j$ is visited, i.e., condition~$j$ is not responded to.

\begin{rmrk}
Let $t$ be a strategy tree such that $\sigma(t)$ is a uniform winning strategy, and let $w \in t$ with $\waittime_j(w) > 0$. Then, $S_j(t,w)$ is finite.
\end{rmrk}

\begin{proof}
Assume $S_j(t,w)$ is infinite. Then, K\"{o}nig's Lemma implies the existence of an infinite play~$w\rho$ whose prefixes are all in $t$, in which a request of condition~$j$ is open after $w$ (as the waiting time is non-zero), but $\rho$ contains no answer. As such a play is consistent with $\sigma(t)$ this contradicts the fact that $\sigma(t)$ is a winning strategy.
\end{proof}

We now formalize the removal of loops, which turns a strategy tree $t$ into a new one denoted by $t'$. Fix some condition~$j \in \natspref{k}$ for which we want to remove loops with large waiting times and fix a strategy tree~$t$ such that $\val(\sigma(t), v) \le \valgame$ for every $v \in W_0(\game)$. Next, we define the tree~$t' \subseteq V^*$ and a mapping~$h \colon t' \rightarrow t$ satisfying $\last(h(w)) = \last(w)$ for every $w \in t' \setminus \set{\epsilon}$. 

We have $\epsilon \in t'$ and $W_0(\game) \subseteq t'$ and define $h(\epsilon) = \epsilon$ and $h(v) = v$ for every $v \in W_0(\game)$. Now, consider some $w \in t'$: we have $h(w) \in t$ with $\last(h(w)) = \last(w)$. For every $h(w)v \in t$ we add $wv$ to $t'$ and it remains to define $h(wv)$. Here, we consider two cases:
\begin{enumerate}
\item If $\waittime_j(h(w)v) \le f^{-1}_j(\val_\game)$, then $h(wv) = h(w)v$.
\item If $\waittime_j(h(w)v) > f^{-1}_j(\val_\game)$, then consider the set~$S_j(t, h(w)v)$. As it is finite we can pick a longest element\footnote{Using the lexicographic order w.r.t.\ some fixed ordering of $V$ to break ties.}~$x$ from $S_j(t, h(w)v)$ satisfying
$ \waitvec(h(w)v) \le \waitvec(x) $ and $\last(x) = v$. Such an element always exists, since $h(w)v \in S_j(t, h(w)v)$ satisfies both requirements. We define $h(wv) = x$.
\end{enumerate}
Note that the property $\last(h(w)) = \last(w)$ is satisfied in both cases of the definition. We begin by listing some straightforward properties of the function~$h$ we use to show that $t'$ is also a strategy tree.  

\begin{rmrk}
\label{remark_cuttingprop}
Let $w = w_0 \cdots w_n \in t'$.
\begin{enumerate}
	
\item\label{remark_cuttingprop_subplay}
$h(w) = w_0 s_0 w_1 s_1 \cdots s_{n-1} w_n$ for some $s_0, \ldots, s_{n-1} \in V^*$.

\item\label{remark_cuttingprop_prefix}
$h(w_0 \cdots w_{n'})$ is a proper prefix of $h(w_0 \cdots w_{n})$ for every $n' < n$. 	

\item\label{remark_cuttingprop_injective}
$h$ is injective.
	
\item\label{remark_cuttingprop_prefixclosure}
If $h(w) = w$, then $h(w_0 \cdots w_{n'}) = w_0 \cdots w_{n'}$ for every $n' < n$.

\item\label{remark_cuttingprop_largewaittime}
Let $w' \in t$. If there is no $w \in t'$ with $h(w) = w'$ then $\waittime_j(w') > f^{-1}_j(\val_\game)$.

\end{enumerate}
\end{rmrk}

Now, we prove that $t'$ is a strategy tree if $t$ is one, and that transforming $t$ into $t'$ preserves waiting time bounds and does not increase the values of the strategy.

\begin{lmm}
\label{lemma_cuttingproponce}
Let $t$ be a strategy tree such that $\val(\sigma(t) ,v ) \le \val_\game$ for every $v \in W_0(\game)$, let $t'$ be constructed as described above, and let $h\colon t' \rightarrow t$ be the function defined in the construction. 
\begin{enumerate}

\item\label{lemma_cuttingproponce_strattree}
$t'$ is a strategy tree.

\item\label{lemma_cuttingproponce_vecbounds}
 $\waitvec(w) \le \waitvec(h(w))$ for every $w \in t'$.

\item\label{lemma_cuttingproponce_waitbounds} 
If $\sigma(t)$ bounds the waiting times for condition~$j'$ by $b$, then so does $\sigma(t')$.

\item\label{lemma_cuttingproponce_valbounds}
 $\val(\sigma(t'), v) \le \val(\sigma(t), v)$ for every $v \in W_0(\game)$.
 
\end{enumerate}
\end{lmm}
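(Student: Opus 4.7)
The plan is to handle items~(\ref{lemma_cuttingproponce_strattree})--(\ref{lemma_cuttingproponce_waitbounds}) by direct inspection and devote the real work to the value bound~(\ref{lemma_cuttingproponce_valbounds}). For~(\ref{lemma_cuttingproponce_strattree}), I would walk through the four defining axioms of a strategy tree: prefix-closure and $t' \cap V = W_0(\game)$ are built into the recursive construction, while the unique-successor and all-successor conditions follow from the invariant $\last(h(w)) = \last(w)$, which places the outgoing edges of $w$ in $t'$ in bijection with the outgoing edges of $h(w)$ in $t$.

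For item~(\ref{lemma_cuttingproponce_vecbounds}), I would induct on $|w|$. The step considers $wv \in t'$: the induction hypothesis $\waitvec(w) \le \waitvec(h(w))$ combined with Remark~\ref{rem_wtevolution} yields $\waitvec(wv) \le \waitvec(h(w)v)$. Case~1 sets $h(wv) = h(w)v$ and concludes immediately; Case~2 chooses $x = h(wv)$ with $\waitvec(h(w)v) \le \waitvec(x)$ by construction, so transitivity closes the induction. Item~(\ref{lemma_cuttingproponce_waitbounds}) is then immediate: since $h$ preserves the first letter of $w$, the string $h(w)$ is a prefix of a play in $\behavior(W_0(\game), \sigma(t))$, so by hypothesis $\waittime_{j'}(h(w)) \le b$, and part~(\ref{lemma_cuttingproponce_vecbounds}) gives $\waittime_{j'}(w) \le b$.

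For item~(\ref{lemma_cuttingproponce_valbounds}), fix a play $\rho' \in \behavior(v, \sigma(t'))$ with $v \in W_0(\game)$. By Remark~\ref{remark_cuttingprop}~(\ref{remark_cuttingprop_prefix}), the strings $h(\rho'_0 \cdots \rho'_n)$ form a chain of proper prefixes, hence determine a unique play $\rho \in \behavior(v, \sigma(t))$; the goal is $\val(\rho') \le \val(\rho)$, and taking the supremum over $\rho'$ then yields the claim. Write $N(n) = |h(\rho'_0 \cdots \rho'_n)| - 1$, $M_n = N(n-1) + 1$, $T_n = M_n - n$, $a_\ell = \pnlty(\rho'_0 \cdots \rho'_\ell)$, and $b_m = \pnlty(\rho_0 \cdots \rho_m)$. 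The pivotal estimate is
\[
\sum_{m < M_n} b_m \;\ge\; \sum_{\ell < n} a_\ell + T_n \valgame ,
\]
obtained by splitting the $M_n$ positions of $\rho$ into the $n$ matched positions $N(\ell)$ (where part~(\ref{lemma_cuttingproponce_vecbounds}) yields $b_{N(\ell)} \ge a_\ell$) and the $T_n$ remaining cut positions, each lying inside a Case~2 insertion $y \in (V \setminus P_j)^*$ starting from a waiting time for condition~$j$ strictly above $f_j^{-1}(\valgame)$: since $y$ avoids $P_j$, the waiting time only grows along $y$, so $\pnlty_j > \valgame$ and hence $b_m > \valgame$ at every cut position.

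Setting $L_n = \frac{1}{M_n} \sum_{m < M_n} b_m$ and $A_n = \sum_{\ell < n} a_\ell$, the estimate rearranges to $A_n/n \le L_n + (T_n/n)(L_n - \valgame)$. The hypothesis $\val(\sigma(t), v) \le \valgame$ gives $\limsup_M B_M/M \le \valgame$. Taking a subsequence $n_k$ that realizes $\val(\rho')$ and, by further thinning, has $L_{n_k} \to \ell \in [0, \valgame]$ and $T_{n_k}/n_k \to \tau \in [0,\infty]$, the correction $(T_n/n)(L_n - \valgame)$ is asymptotically non-positive whenever $\ell < \valgame$, and we conclude $\val(\rho') \le \ell \le \val(\rho)$. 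The main obstacle is the borderline regime $\ell = \valgame$ and $\tau = \infty$: here the cut-segment lower bound already forces $\val(\rho) = \valgame$, and a finer analysis controlling the rate at which $L_n - \valgame$ vanishes (using that the total excess $B_{M_n} - M_n \valgame$ is $o(M_n)$) is required to show $(T_n/n)(L_n - \valgame) \to 0$ along the chosen subsequence, completing the inequality.
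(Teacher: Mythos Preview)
Your treatment of items~(\ref{lemma_cuttingproponce_strattree})--(\ref{lemma_cuttingproponce_waitbounds}) matches the paper's proof essentially verbatim: the induction for~(\ref{lemma_cuttingproponce_vecbounds}) and the immediate corollary~(\ref{lemma_cuttingproponce_waitbounds}) are exactly what the paper does, and your sketch for~(\ref{lemma_cuttingproponce_strattree}) is the same outline the paper gives.

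For item~(\ref{lemma_cuttingproponce_valbounds}) you also set up the same comparison as the paper: the play $\rho$ in $t$ is the limit of the $h$-images, the positions of $\rho$ split into ``matched'' positions $N(\ell)$ and ``cut'' positions, and at every cut position the penalty exceeds $\valgame$. The paper phrases the conclusion as ``omitting summands which are larger than the limit superior of the averages can only decrease that limit superior'', which is exactly the inequality you are trying to prove via your estimate $A_n/n \le L_n + (T_n/n)(L_n - \valgame)$.

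The gap is in your handling of the borderline case $\ell = \valgame$, $\tau = \infty$. Your proposed ``finer analysis'' does not go through: from $\limsup_M B_M/M \le \valgame$ you only get $B_{M_n} - M_n\valgame \le \epsilon M_n$ for every $\epsilon>0$ eventually, hence $(T_n/n)(L_n-\valgame) \le \epsilon\,T_n M_n/(n M_n) = \epsilon\,T_n/n$, which blows up when $T_n/n\to\infty$. There is no rate control available on $L_n-\valgame$ from the hypothesis alone, so an $\infty\cdot 0^{+}$ indeterminate remains.

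What resolves this---and what the paper's one-line justification is implicitly using---is the discreteness of the penalties. Since each $f_j\colon\nats\to\nats$ and $\valgame=\sum_j f_j(sk2^k)\in\nats$, every cut position has penalty $\pnlty(\cdot)\ge \valgame+1$, not merely $>\valgame$. Replacing $\valgame$ by $\valgame+1$ in your pivotal estimate gives
\[
\frac{A_n}{n}\;\le\;L_n+\frac{T_n}{n}\bigl(L_n-(\valgame+1)\bigr).
\]
Now $\limsup_n L_n \le \val(\rho)\le\valgame<\valgame+1$, so eventually $L_n-(\valgame+1)<0$ and the correction term is non-positive regardless of $T_n/n$. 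Hence $A_n/n\le L_n$ for all large $n$, and $\val(\rho')=\limsup A_n/n\le\limsup L_n\le\val(\rho)$ with no case distinction needed.
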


\begin{proof}
(\ref{lemma_cuttingproponce_strattree}) Prefix-closure and the first requirement on a strategy tree can be proven by a straightforward induction over the length of $w \in t'$ while the second requirement is satisfied by construction. Now, consider $w_0 \cdots w_n \in t'$ with $w_n \in V_0$. We have $\last(h(w_0 \cdots w_n)) = w_n$, i.e., there is a unique successor $v$ of $w_n$ with $h(w_0 \cdots w_n) v \in t$. By construction, $w_0 \cdots w_n v$ is added to $t'$, but no $w_0 \cdots w_n v'$ for $v \not= v'$. Hence, there is a unique $v$ such that $w_0 \cdots w_n v \in t'$, i.e., the third requirement is satisfied. The reasoning for the fourth requirement is dual.

(\ref{lemma_cuttingproponce_vecbounds}) By induction over $|w|$. The claim is trivially true for $|w| \le 1$. Thus, consider $w = w'v \in t$. If $h(w'v) = h(w')v$ then 
\[ \waitvec(w) = \waitvec(w'v) \le \waitvec(h(w')v)  = \waitvec(h(w'v)) \enspace,\]
where the inequality follows from an application of Remark~\ref{rem_wtevolution} to the induction hypothesis $\waitvec(w') \le \waitvec(h(w'))$. On the other hand, if $h(w'v) = x$ for some $x$ satisfying $\waitvec(h(w')v) \le \waitvec(x)$ then \[ \waitvec(w) = \waitvec(w'v) \le \waitvec(h(w')v) \le  \waitvec(x) = \waitvec(h(w'v)) \enspace, \]
where the first inequality again follows from an application of Remark~\ref{rem_wtevolution} to the induction hypothesis.

(\ref{lemma_cuttingproponce_waitbounds}) This follows directly from Item~(\ref{lemma_cuttingproponce_vecbounds}).

(\ref{lemma_cuttingproponce_valbounds}) Let $\rho = \rho_0 \rho_1 \rho_2 \cdots$ be consistent with $\sigma(t')$ and consider the sequence~\[ h(\rho_0), h(\rho_0\rho_1), h(\rho_0\rho_1\rho_2), \ldots \]
of elements from $t$, which is an increasing chain in the (strict) prefix relation. Hence, the sequence has a unique limit $h(\rho) = \rho_0' \rho_1' \rho_2' \cdots \in V^\omega$ such that $h(\rho_0 \cdots \rho_n)$ is a prefix of $h(\rho)$ for every $n$, which is the play from which $\rho$ is obtained by removing loops. The limit is consistent with $\sigma(t)$ as it is a path through $t$. 

Let 
\[ R = \set{ \ell \in \nats \mid \text{there is no $n$ with }h(\rho_0 \cdots \rho_{n}) = \rho_0' \cdots \rho_\ell'  } \]
be the positions of vertices of $h(\rho)$ that are removed. Due to Remark~\ref{remark_cuttingprop}(\ref{remark_cuttingprop_largewaittime}), we have $\waittime_j(\rho_0' \cdots \rho_\ell') > f^{-1}_j(\val_\game)$ for every $\ell \in R$ and therefore 
\[ \pnlty(\rho'_0 \cdots \rho'_\ell) \ge f_j(\waittime_j(\rho'_0 \cdots \rho'_\ell)) > \val_\game \ge \val(\sigma(t), \rho_0')  \ge \val(h(\rho))\enspace.\] 
Thus, 
\begin{equation} \limsup_{n \rightarrow \infty} \frac{1}{n} \sum_{\ell = 0}^{n-1} \pnlty(h(\rho_0 \cdots \rho_\ell)) \le \val(h(\rho))
 	 \enspace,
\label{eq_boundonlimit} \end{equation}
since the average (and therefore also the limit superior of the averages) only decreases when we omit summands which are larger than the limit superior of the averages, i.e., those for $\rho_0 \cdots \rho_\ell$ with $\ell \in R$. 
Furthermore, we have $\waitvec(\rho_0 \cdots \rho_\ell) \le \waitvec(h(\rho_0 \cdots \rho_\ell))$ for every $\ell$ due to Item~(\ref{lemma_cuttingproponce_vecbounds}), and therefore 
\[
\frac{1}{n}\sum_{\ell = 0}^{n-1} \pnlty(\rho_0 \cdots \rho_{\ell})
\le
\frac{1}{n}\sum_{\ell = 0}^{n-1} \pnlty(h(\rho_0 \cdots \rho_{\ell})) \enspace.
\]
Thus, the value of $\rho$, the limit superior of the left-hand side of the inequality is smaller or equal to the limit superior of the right-hand side, which in turn is smaller or equal to the value of $h(\rho)$, as shown in Equation~(\ref{eq_boundonlimit}). Thus, we have $\val(\rho) \le \val(h(\rho))$.

Now, we can lift this upper bound to the values of the strategies: we have 
\begin{align*} \val(\sigma(t'), v) = 
\sup\limits_{\rho\in\behavior(v, \sigma(t'))} \val(\rho) \le& 
\sup\limits_{\rho\in\behavior(v, \sigma(t'))} \val(h(\rho)) \\
\le&
\sup\limits_{\rho\in\behavior(v, \sigma(t))} \val(\rho) = 
\val(\sigma(t), v) \enspace,
 \end{align*}
where the first inequality is the one just proven above and the second one due to the fact that $h(\rho)$ is consistent with $\sigma(t)$.
\end{proof}

From now on denote the tree~$t'$ by $I_j(t)$ as it is obtained by removing loops w.r.t.\ condition~$j$ from $t$. Note that we have not claimed that $I_j(t)$ bounds the waiting times for condition~$j$. We will now apply $I_j$ infinitely often and show that the limit of the trees obtained this way does indeed bound the waiting times. Whether applying $I_j$ once suffices to achieve this is an open question.

Formally, given a strategy tree~$t$ we define an infinite sequence of trees via $t_0 = t$ and $t_{n+1} = I_{j}(t_{n})$. Furthermore, for $n>0$ let $h_n \colon t_n \rightarrow t_{n-1}$ be the function constructed in the definition of $t_n$. We define the limit~$t_\omega$ of the $t_n$ as follows: $w \in t_\omega$ if and only if $h_{n}(w) = w$ for almost all $n$, i.e., for all but finitely many $n$. Note that $h_n(w) = w$ implies $w \in t_n \cap t_{n-1}$. 

\begin{lmm}\label{lemma_cuttingproplimit}
Let $t$ be a strategy tree such that $\val(\sigma(t) ,v ) \le \val_\game$ for every $v \in W_0(\game)$ and let $t_\omega$ be constructed as described above.
\begin{enumerate}

\item\label{lemma_cuttingproplimit_strattree}
$t_\omega$ is a strategy tree.

\item\label{lemma_cuttingproplimit_newbound}
$\sigma(t_\omega)$ bounds the waiting times for condition~$j$ to $f^{-1}_j(\val_\game) + \bnd(s,k-1)$.

\item\label{lemma_cuttingproplimit_waitbounds}
If $\sigma(t)$ bounds the waiting times for condition~$j'$ by $b$, then so does $\sigma(t_\omega)$.

\item\label{lemma_cuttingproplimit_valbounds}
$\val(\sigma(t_\omega), v) \le \val(\sigma(t), v)$ for every $v \in W_0(\game)$.
 
\end{enumerate}
\end{lmm}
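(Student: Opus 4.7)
Plan:

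I plan to prove the four parts by first establishing item~(\ref{lemma_cuttingproplimit_newbound}), then using it to verify the strategy-tree axioms in~(\ref{lemma_cuttingproplimit_strattree}), and finally obtaining items~(\ref{lemma_cuttingproplimit_waitbounds}) and~(\ref{lemma_cuttingproplimit_valbounds}) by limit arguments from Lemma~\ref{lemma_cuttingproponce}.

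For item~(\ref{lemma_cuttingproplimit_newbound}), I argue by contradiction. Suppose some $w = w_0 \cdots w_n \in t_\omega$ satisfies $\waittime_j(w) > f_j^{-1}(\valgame) + \bnd(s,k-1)$. Then the last $\bnd(s,k-1)+1$ positions of $w$ all have $\waittime_j > f_j^{-1}(\valgame)$ and condition~$j$'s waiting time strictly increases throughout. Projecting the waiting-time vectors at the last $\bnd(s,k-1)$ positions onto the $k-1$ coordinates other than~$j$, the quantitative Dickson bound forces a dickson pair in the projection, which lifts to a full dickson pair because condition~$j$'s waiting time strictly increases across the infix. This yields positions $n_1 < n_2$ with $w_{n_1}=w_{n_2}$ and $\waitvec(w_0 \cdots w_{n_1}) \le \waitvec(w_0 \cdots w_{n_2})$. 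Set $u_1 = w_0 \cdots w_{n_1}$, $u_2 = w_0 \cdots w_{n_2}$, and $y = w_{n_1+1} \cdots w_{n_2} \in (V \setminus P_j)^*$. Pick $m$ so large that $h_m(w) = w$; Remark~\ref{remark_cuttingprop}(\ref{remark_cuttingprop_prefixclosure}) then implies that $h_m$ fixes every prefix of $w$. The goal is to exhibit, in $t_{m-1}$, a proper extension $x$ of $u_2$ lying in $S_j(t_{m-1},u_2)$ with $\last(x)=w_{n_2}$ and $\waitvec(u_2)\le \waitvec(x)$, which contradicts $h_m(u_2)=u_2$.

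Producing this proper extension is the main obstacle. The plan is to exploit the loop-cutting already applied at the prefix~$u_1$ in an earlier iteration: because $\waittime_j$ is large throughout $[n_1,n_2]$, the construction selecting $h_{m'}(u_1 w_{n_1+1})$ for some earlier~$m'$ had to pick the longest valid extension ending at $w_{n_1}=w_{n_2}$, with $u_2 = u_1 y$ itself being a candidate. Propagating this choice through the subsequent iterations, together with the monotonicity from Remark~\ref{rem_wtevolution} which guarantees $\waitvec(u_2)\le \waitvec(u_2 y)$ whenever $u_2 y$ materializes in the tree, should force a proper extension of $u_2$ to appear in $S_j(t_{m-1},u_2)$. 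In essence, the dickson pair in~$w$ cannot survive the iterative process: each loop-cutting step only replaces a candidate by an equal or longer one, so a proper extension must persist in the limit.

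With item~(\ref{lemma_cuttingproplimit_newbound}) in hand, item~(\ref{lemma_cuttingproplimit_strattree}) is obtained by verifying the four strategy-tree conditions: $\epsilon\in t_\omega$ and $t_\omega\cap V=W_0(\game)$ are immediate since $h_n$ fixes $\epsilon$ and $W_0(\game)$ at every iteration; prefix closure follows from Remark~\ref{remark_cuttingprop}(\ref{remark_cuttingprop_prefixclosure}); and for the $V_0$- and $V_1$-successor conditions, when $h_m(w)=w$ the successors of $w$ in $t_m$ coincide with those of $w$ in $t_{m-1}$, and item~(\ref{lemma_cuttingproplimit_newbound}) via the same dickson-pair argument ensures these successors are themselves eventually stable and hence lie in $t_\omega$. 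Items~(\ref{lemma_cuttingproplimit_waitbounds}) and~(\ref{lemma_cuttingproplimit_valbounds}) follow from iterated use of Lemma~\ref{lemma_cuttingproponce}(\ref{lemma_cuttingproponce_waitbounds}) and~(\ref{lemma_cuttingproponce_valbounds}): each $\sigma(t_n)$ preserves the bound $b$ for $j'$, and since every $w\in t_\omega$ appears in all but finitely many $t_n$ with the same waiting times, the bound transfers to $\sigma(t_\omega)$; the non-increasing sequence $\val(\sigma(t_n),v)$ then yields $\val(\sigma(t_\omega),v)\le \val(\sigma(t),v)$ once one verifies that every play in $t_\omega$ arises as a pointwise limit of plays in the $t_n$.
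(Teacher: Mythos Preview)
Your argument for item~(\ref{lemma_cuttingproplimit_newbound}) sets up the right dickson pair but then aims the contradiction at the wrong prefix. You try to exhibit a proper extension of $u_2 = w_0\cdots w_{n_2}$ in $S_j(t_{m-1},u_2)$, and you correctly identify this as ``the main obstacle''; your plan to manufacture such an extension by ``propagating choices through subsequent iterations'' is vague and, as far as I can see, does not go through. The point is that no such extension is needed: the contradiction is already at $u_1 = w_0\cdots w_{n_1}$. Since $h_m$ fixes every prefix of $w$, in particular $h_m(u_1)=u_1$ and $h_m(w_0\cdots w_{n_1-1})=w_0\cdots w_{n_1-1}$. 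Because $\waittime_j(u_1)>f_j^{-1}(\valgame)$, the second case of the definition applies when computing $h_m(u_1)$: it must be the \emph{longest} $x\in S_j(t_{m-1},u_1)$ with $\last(x)=w_{n_1}$ and $\waitvec(u_1)\le\waitvec(x)$. But $u_2$ is such an $x$ (it lies in $t_{m-1}$, the infix $w_{n_1+1}\cdots w_{n_2}$ avoids $P_j$, $\last(u_2)=w_{n_2}=w_{n_1}$, and $\waitvec(u_1)\le\waitvec(u_2)$ by the dickson property), and $|u_2|>|u_1|$. Hence $h_m(u_1)\neq u_1$, a contradiction. This is the paper's argument, and it removes the obstacle you flagged entirely.

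For item~(\ref{lemma_cuttingproplimit_strattree}), your plan to derive eventual stability of the successors $wv$ from ``the same dickson-pair argument'' does not work: that argument shows that certain elements of $t_\omega$ cannot have large $j$-waiting time, but it says nothing about why $h_n(wv)=wv$ must hold for almost all $n$ when $\waittime_j(wv)>f_j^{-1}(\valgame)$. The missing idea is a monotonicity argument on the finite sets $S_j(t_n,wv)$: once $w$ is fixed by all $h_n$ with $n\ge n_w$, the successor $wv$ lies in every $t_n$ for $n\ge n_w-1$, the map $h_{n+1}$ injects $S_j(t_{n+1},wv)$ into $S_j(t_n,wv)$, and whenever $h_{n+1}(wv)\neq wv$ the element $wv\in S_j(t_n,wv)$ is missed by this injection (Remark~\ref{remark_cuttingprop}(\ref{remark_cuttingprop_subplay})), so the cardinality strictly drops. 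As the cardinalities are natural numbers, this can happen only finitely often, forcing $h_n(wv)=wv$ eventually. Without this counting argument your proof of~(\ref{lemma_cuttingproplimit_strattree}) is incomplete, and since (\ref{lemma_cuttingproplimit_strattree}) is logically prior to (\ref{lemma_cuttingproplimit_newbound})--(\ref{lemma_cuttingproplimit_valbounds}) (they all speak about $\sigma(t_\omega)$), the order you chose does not actually help. Your sketches for (\ref{lemma_cuttingproplimit_waitbounds}) and (\ref{lemma_cuttingproplimit_valbounds}) are on the right track; they are made precise by composing the $h_n$ into a single map $h_\omega\colon t_\omega\to t$ and rerunning the proof of Lemma~\ref{lemma_cuttingproponce}(\ref{lemma_cuttingproponce_valbounds}) with $h_\omega$ in place of $h$.
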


\begin{proof}
(\ref{lemma_cuttingproplimit_strattree}) Prefix-closure of $t_\omega$ follows from Remark~\ref{remark_cuttingprop}.(\ref{remark_cuttingprop_prefixclosure}): let $w \in t_\omega$, i.e., we have $h_n(w) = w$ for almost all $n$. Then, we have $h_n(w') = w'$ for the same $n$ and all prefixes~$w'$ of $w$. Hence, $w' \in t_\omega$. 

Furthermore, $w$ satisfies the first requirement on a strategy tree, since $w$ is in some $t_n$, for which the first requirement holds due to Lemma~\ref{lemma_cuttingproponce}.(\ref{lemma_cuttingproponce_strattree}). Furthermore, the second requirement is satisfied by construction: every $t_n$ contains $W_0(\game)$ and we have $h_n(v) = v$ for every $v \in W_0(\game)$. Hence, it remains to prove the last two properties. 

For the third requirement, consider $w \in t_\omega$ with $\last(w) \in V_0$. We have to show that there is a unique $v$ with $wv \in t_\omega$. Let $n_w$ be such that $h_n(w) = w$ for every $n \ge n_w$, which implies $w \in t_n$ for every $n \ge n_w -1$. 

As every $t_n$ is a strategy tree, there is a unique $v_n$ with $wv_n \in t_n$ for every $n \ge n_w -1$. We claim $v_n = v_{n_w-1}$ for every $n \ge n_w -1$. The induction start $n = n_w-1$ is trivial, so consider some $n > n_w-1$: $wv_{n}$ is in $t_n$, since $h_{n}(w)v_n = w v_n$ is in $t_{n-1}$. Now, this implies $v_n = v_{n-1}$, as $v_{n-1}$ is the unique vertex~$v$ with $wv \in t_{n-1}$. An application of the induction hypothesis $v_{n-1} = v_{n_w-1}$ yields the desired result. 

From now on we denote $v_{n_w-1}$ by $v$. We have to show $h_n(wv) = wv$ for almost all $n$. As a first case, assume we have $\waittime_j(wv) \le f^{-1}_j ( \val_\game )$. Then, we have $h_n(wv) = wv$ for every $n \ge n_w -1$, as we are in case (1) of the definition of $h_n(wv)$. 
Now, assume we have $\waittime_j(wv) > f^{-1}_j ( \val_\game )$. We claim 
\[ | S_j(t_n, wv) | \ge | S_j(t_{n+1}, wv) |\]
for every $n \ge n_w -1$.  Every element in $S_j(t_{n+1}, wv)$ is mapped by $h_{n+1}$ to an element in $S_j(t_{n}, wv)$. Hence, finiteness of the sets and  injectivity of $h_{n+1}$
proves our claim. Furthermore, we have equality \[ | S_j(t_n, wv) | = | S_j(t_{n+1}, wv) |\]
only in case $h_{n+1}(wv) = wv $: if $h_{n+1}(wv) \not= wv$, then there is no element in $S_j(t_{n+1}, wv)$ that is mapped to $wv \in S_j(t_{n}, wv)$, due to Remark~\ref{remark_cuttingprop}.(\ref{remark_cuttingprop_subplay}).

Thus, the sequence $(S_j(t_n, wv))_{n \ge n_w-1}$ gets stationary and from that point onwards, we have $h_n(wv) = wv$. Thus, $wv \in t_\omega$. Furthermore, $v$ is unique since $wv' \in t_\omega$ with $v' \not= v$ implies that there is an $n$ with $wv \in t_n$ and $wv' \in t_n$, which contradicts the fact that $t_n$ is a strategy tree.

The fourth and final requirement on $t_\omega$ can be proven dually: let $w \in t_\omega$ with $\last(w) \in V_1$. We have to show that $wv \in t_\omega$ for every successor~$v$ of $\last(w)$. Again, every such $wv$ is in every $t_n$ for $n \ge n_w -1$. Now, using the same reasoning as for the third requirement, one can show $wv \in t_\omega$ for every $v$.

(\ref{lemma_cuttingproplimit_newbound}) Assume there is a $w = w_0 \cdots w_m \in t_\omega$ with $\waittime_j( w ) > f^{-1}_j(\val_\game) + \bnd(s,k-1)$. Let $w' = w_0 \cdots w_{m - \bnd(s,k-1) }$. Thus, we have $\waittime_j( w' ) > f^{-1}_j(\val_\game)$. 

Now, consider the infix~$ w_{{m - \bnd(s,k-1)+1 }} \cdots w_m$ of length~$\bnd(s,k-1)$. It contains a dickson pair~$(m_0, m_1)$ with $m - \bnd(s,k-1)+1 \le  m_0 < m_1 \le m$ by definition of $\bnd(s,k-1)$. Here, the parameter $k -1$ stems from the fact that the waiting times for condition~$j$ increase throughout the infix, i.e., there are only $k-1$ conditions we have to consider to obtain a dickson pair. 

Now, consider an index~$n$ with $h_n(w) = w$, which implies $h_n( w_0 \cdots w_{m_0} ) = w_0 \cdots w_{m_0} $, too. Since we have $\waittime_j( w_0 \cdots w_{m_0} ) > f^{-1}_j(\val_\game)$, we are in the second case of the definition of $h_n(w_0 \cdots w_{m_0})$ and the existence of $w_0 \cdots w_{m_1} \in S_j(t_{n-1}, w_0 \cdots w_{m_0})$ with $w_{m_0} = w_{m_1}$ and $\waitvec( w_0 \cdots w_{m_0}) \le \waitvec( w_0 \cdots w_{m_1})$ implies $h_n(w_0 \cdots w_{m_0}) \not= w_0 \cdots w_{m_0}$, i.e., we have derived the desired contradiction.

(\ref{lemma_cuttingproplimit_waitbounds}) 
For every $w \in t_\omega$ there is an $n_w \in \nats$ such that $h_n(w) = w$ for every $n \ge n_w$. Furthermore, due to Remark~\ref{remark_cuttingprop}.(\ref{remark_cuttingprop_prefixclosure}) we can pick the $n_w$ in way that they satisfy $n_{w'} \le n_w$ for every $w,w'$ such that $w'$ is a prefix of $w$.

Now, define $h_\omega \colon t_\omega \rightarrow t$ via 
\[  h_\omega(w) = h_1 ( h_2( \cdots h_{n_w-2}(h_{n_w-1} (w)) \cdots )) \enspace. \]
Applying Lemma~\ref{lemma_cuttingproponce}.(\ref{lemma_cuttingproponce_waitbounds}) 
inductively yields $\waitvec(w) \le \waitvec(h_\omega(w))$ for every $w \in t_\omega$. The result follows.

(\ref{lemma_cuttingproplimit_valbounds})
The proof is analogous to the one for Lemma~\ref{lemma_cuttingproponce}.(\ref{lemma_cuttingproponce_valbounds}), we just have to replace $h$ by $h_\omega$.
\end{proof}

We denote the limit $t_\omega$ of the applications of $I_j$ to $t$ by $I_{j, \omega}(t)$. Now, we are ready to prove the main result of this subsection.

\begin{proof}[Proof of Lemma~\ref{lemma_bounding}]
Consider the strategy tree $t' = I_{k, \omega}(\cdots I_{2, \omega}(I_{1, \omega}(t(\sigma))) \cdots ) $ and the resulting strategy~$\sigma' = \sigma(t')$. An inductive application of Lemma~\ref{lemma_cuttingproplimit} yields that $\sigma'$ bounds the waiting times for every condition~$j$ by $f^{-1}_j(\val_\game) + \bnd(s,k-1) $ and satisfies $\val(\sigma', v) \le \val(\sigma, v)$ for every $v \in V$.
\end{proof}

The construction presented here gives a correction to the one presented in~\cite{HornThomasWallmeier08} where each loop removal operator~$I_j$ is applied only once. 
\section{Computing Optimal Strategies for RR Games}
\label{sec_computingoptimalstrategies}

In this section, we prove our main result: Player~$0$ has optimal finite-state
winning strategies in RR games, which are effectively computable. To this end,
we construct a mean-payoff game in an arena which keeps track of the waiting
times up to the bounds~$f^{-1}_j(\val_\game) + \bnd(s, k-1)$ and whose weight function reflects the penalty functions.

Then, we prove that an optimal strategy for the mean-payoff game, which always exists, induces an
optimal winning strategy for the RR game. This approach is complete due to the
fact that in an RR game an optimal strategy can be assumed to
have bounded waiting times. We begin by introducing mean-payoff games in
Subsection~\ref{subsection_mpgdef} and then prove our main result in
Subsection~\ref{subsection_mpgreduction}.

\subsection{Mean-Payoff Games}
\label{subsection_mpgdef}

A mean-payoff game~$\game = (\arena, \weight)$ consists of an arena~$\arena$
with set~$E$ of edges and a weight function~$\weight\colon E \rightarrow
\set{-W, \ldots , W}$ for some $W \in \nats$. Given a play $\rho = \rho_0
\rho_1 \rho_2 \cdots$ we define its value for Player~$0$ as 
\[\mpgval_0(\rho) = \limsup_{n\rightarrow \infty} \frac{1}{n}\sum_{\ell =1}^n
\weight(\rho_{\ell-1}, \rho_\ell)\enspace,\]
and its value for Player~$1$ as
\[\mpgval_1(\rho) = \liminf_{n\rightarrow \infty} \frac{1}{n}\sum_{\ell =1}^n
\weight(\rho_{\ell-1}, \rho_\ell)\enspace.\]
Intuitively, Player~$0$ wants to minimize $\mpgval_0(\rho)$ while Player~$1$ wants to maximize $\mpgval_1(\rho)$. Note that we always have $- W \le
\mpgval_1(\rho) \le \mpgval_0(\rho) \le W$. For notational convenience we have
swapped the roles of the players, i.e., classically Player~$0$'s value is the
$\liminf$ and Player~$1$'s value is the $\limsup$ of the mean weights.

\begin{thrm}[\hspace{-.00001cm}\cite{EhrenfeuchtM79, ZwickP96}]
\label{theorem_mpg}
For every mean-payoff game there exist positional strategies~$\sigmaopt$ for
Player~$0$ and $\tauopt$ for Player~$1$ and values~$\nu(v)$ for every 
vertex~$v$ such that
\begin{enumerate}

\item every play~$\rho \in \behavior(v, \sigmaopt)$ satisfies
$\mpgval_0(\rho) \le \nu(v)$, and
	
\item every play~$\rho \in \behavior(v, \tauopt)$ satisfies $\mpgval_1(\rho)
\ge \nu(v)$.

\end{enumerate}
The strategies and values are computable in pseudo-polynomial time (i.e., in polynomial time in the
size of the arena and in the maximal weight of an edge).
\end{thrm}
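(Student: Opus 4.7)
The plan is to reduce to the classical positional determinacy theorem for mean-payoff games and to recover the pseudo-polynomial complexity from finite-horizon value iteration.

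First I would set up the finite-horizon values: define $\nu_n \colon V \to \mathbb{Z}$ by $\nu_0(v) = 0$ and, for $v \in V_0$, $\nu_{n+1}(v) = \min\{\weight(v,v') + \nu_n(v') : (v,v') \in E\}$, with $\max$ replacing $\min$ for $v \in V_1$. Each iteration takes $O(|E|)$ time, and $\nu_n(v)$ is the optimal total weight of the $n$-step game from $v$, realized by a positional strategy within that horizon.

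The core step is to exhibit an infinite-horizon value vector $\nu \colon V \to \mathbb{Q}$ and a potential $\varphi \colon V \to \mathbb{Q}$ satisfying, for each $v$, the Bellman equation $\nu(v) + \varphi(v) = \min\{\weight(v,v') + \varphi(v') : (v,v') \in E\}$ if $v \in V_0$ (with $\max$ if $v \in V_1$), together with the invariance $\nu(v) = \nu(v')$ at every successor that attains the optimum. Given such $(\nu, \varphi)$, I let $\sigmaopt$ and $\tauopt$ be any positional choice of optimal successors. A telescoping argument on $\weight(\rho_{\ell-1}, \rho_\ell) \le \nu(\rho_{\ell-1}) + \varphi(\rho_{\ell-1}) - \varphi(\rho_\ell)$ for plays consistent with $\sigmaopt$ (equality at Player~$0$ moves, inequality at Player~$1$ moves since the Bellman $\max$ dominates each summand) together with $\nu$-invariance along the chosen successors gives $\mpgval_0(\rho) \le \nu(\rho_0)$ after dividing by $n$ and taking $\limsup$; the symmetric bound $\mpgval_1(\rho) \ge \nu(\rho_0)$ for plays consistent with $\tauopt$ follows analogously.

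The main obstacle is establishing existence of $(\nu, \varphi)$, which is the classical positional determinacy result for mean-payoff games. It can be proved by induction on $|V|$ with case analysis on vertex ownership and edge structure, by taking the limit of discounted mean-payoff games as the discount factor tends to one, or by showing that the Bellman differences $\nu_{n+1} - \nu_n$ stabilize pointwise after polynomially many iterations. Any of these routes is effective: all intermediate values are integers bounded in magnitude by $n \cdot W$, and stabilization occurs within a polynomial-in-$|V|$ multiple of $W$ iterations, yielding the pseudo-polynomial bound on computing $\nu$, $\sigmaopt$, and $\tauopt$.
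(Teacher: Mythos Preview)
The paper does not prove this theorem at all: it is stated with a citation to \cite{EhrenfeuchtM79, ZwickP96} and used as a black box in Section~\ref{subsection_mpgreduction}. There is therefore no proof in the paper to compare your proposal against.

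Your sketch is a reasonable high-level outline of the classical argument (potential functions plus Bellman optimality, with Zwick--Paterson-style finite-horizon value iteration for the complexity bound), and the telescoping step you describe is the standard way to extract the $\limsup$/$\liminf$ guarantees from a solution $(\nu,\varphi)$. One caveat: the telescoping bound you write actually gives $\mpgval_0(\rho) \le \max_{v'} \nu(v')$ along the path rather than $\nu(\rho_0)$ directly, because at Player~$1$ vertices the play may move to a successor with a different (possibly larger) $\nu$-value; you need the additional monotonicity that from any vertex Player~$1$ cannot reach a higher-$\nu$ region against $\sigmaopt$, which follows from the Bellman inequality but deserves a sentence. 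Also, the claim that ``$\nu_{n+1}-\nu_n$ stabilize pointwise after polynomially many iterations'' is not literally how Zwick--Paterson proceed (they bound $|\nu_n(v)/n - \nu(v)|$ and round), so if you take that route you should be explicit about the rounding step that recovers the exact rational value.
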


Especially, the unique play $\rho \in \behavior(v, \sigmaopt) \cap
\behavior(v, \tauopt)$ satisfies $\mpgval_0(\rho) = \mpgval_1(\rho) = \nu(v)$.
The strategy~$\sigmaopt$ is optimal in the sense that there is no strategy for
Player~$0$ that guarantees a strictly smaller value than $\nu(v)$ when
starting from $v$. The analogous statement is true for $\tauopt$.
\subsection{Computing Optimal Strategies for RR Games via Mean-Payoff Games}
\label{subsection_mpgreduction}

In this subsection, we prove our main theorem: optimal strategies for RR games exist and can be effectively computed via the solution of a single mean-payoff game.

\begin{thrm}
\label{theorem_main}
In every RR game, Player~$0$ has an optimal finite-state winning strategy,
which is effectively computable.
\end{thrm}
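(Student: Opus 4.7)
The plan is to reduce the problem to solving a mean-payoff game on an expanded arena that tracks waiting times up to a fixed bound. By Corollary~\ref{corollary_upperboundonval}, Player~$0$ has a winning strategy in $\game$ of value at most $\valgame$ from every vertex of $W_0(\game)$, and by Lemma~\ref{lemma_bounding} we may assume this strategy additionally bounds each $\waittime_j$ by $b_j := f^{-1}_j(\valgame) + \bnd(s, k-1)$. Hence the search for an optimal strategy is confined to strategies whose waiting times are captured by the finite space $\prod_{j=1}^{k} \set{0, 1, \ldots, b_j}$.

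We encode this search space as a memory structure $\mem = (M, \init, \update)$ where $M$ consists of vectors in $\prod_{j=1}^{k}\set{0, 1, \ldots, b_j, \bot_j}$: each coordinate tracks the current waiting time for condition~$j$, and the extra sink value $\bot_j$ is entered once $b_j$ would be exceeded and never left. On the expanded arena $\arena \times \mem$ we define the mean-payoff game $\game'$ whose weight on any edge entering $(v', \vec{m}')$ is $\sum_{j \in \natspref{k}} \weight_j(m'_j)$, with $\weight_j(m) = f_j(m)$ for $m \le b_j$ and $\weight_j(\bot_j) = \valgame + 1$. The game $\game'$ has doubly-exponential size in $|\game|$ and admits a positional optimal strategy $\sigmaopt$ for Player~$0$ by Theorem~\ref{theorem_mpg}, computable in pseudo-polynomial time in $|\game'|$. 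We then lift $\sigmaopt$ to a finite-state strategy $\sigma$ for Player~$0$ in $\game$ with memory~$\mem$ via Lemma~\ref{lem_reductiongivesmemory}.

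The proof that $\sigma$ is optimal rests on a value correspondence. For any play $\rho$ of $\game$ whose waiting times stay bounded by the $b_j$, its extension $\extend(\rho)$ avoids all $\bot_j$ and a direct computation yields $\mpgval_0(\extend(\rho)) = \val(\rho)$, because the two running sums differ only by a boundary term bounded by $\sum_j f_j(b_j)$ that vanishes after division by $n$. Any play that enters some $\bot_j$ accumulates weight at least $\valgame + 1$ on every subsequent edge, hence has mean-payoff value strictly above $\valgame$. Combined with the existence (by Corollary~\ref{corollary_upperboundonval} and Lemma~\ref{lemma_bounding}) of an RR strategy of value at most $\valgame$ whose waiting times are bounded by the $b_j$, this shows that the mean-payoff value at $(v, \init(v))$ is at most $\valgame$ for every $v \in W_0(\game)$. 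Consequently $\sigmaopt$ avoids all overflow states from these vertices, so the induced $\sigma$ has bounded waiting times, is a winning strategy, and its RR-value equals the mean-payoff value at $(v, \init(v))$. Any RR strategy $\sigma'$ of strictly smaller value could be transformed by Lemma~\ref{lemma_bounding} into a strategy with waiting times bounded by the $b_j$ and equal or smaller value, which lifts to a strategy in $\game'$ of the same value that is strictly better than $\sigmaopt$, contradicting optimality. On $W_1(\game)$, Lemma~\ref{lemma_finitevalue}(\ref{lemma_finitevalue_region}) forces every strategy to have value~$\infty$, so optimality is vacuous there.

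The hard part will be the careful tuning of the three parameters $\valgame$, $b_j$, and the overflow penalty: the penalty must strictly exceed $\valgame$ so that optimal mean-payoff play is forced to avoid overflow, while the $b_j$ must be exactly the Dickson-type bounds from Lemma~\ref{lemma_bounding} in order to guarantee that some optimal RR strategy respects them. Each of $W_0(\game)$, $\valgame$, the bounds $b_j$, and the mean-payoff optimum in $\game'$ is effectively computable, so the whole construction is effective.
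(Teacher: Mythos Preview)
Your proposal is correct and follows essentially the same route as the paper: build a mean-payoff game on the arena expanded by bounded waiting-time vectors with an absorbing overflow state carrying a penalty strictly above $\valgame$, use Corollary~\ref{corollary_upperboundonval} and Lemma~\ref{lemma_bounding} to guarantee that the mean-payoff optimum avoids overflow, pull the positional optimum back via the memory $\mem$, and derive a contradiction from any putatively better RR strategy by bounding its waiting times (Lemma~\ref{lemma_bounding}) and lifting it against Player~1's mean-payoff optimum. The only deviations are cosmetic---per-coordinate sinks~$\bot_j$ instead of a single~$\bot$, weights read at the edge target rather than the source (your boundary-term remark handles the resulting index shift), and the overflow penalty $\valgame+1$ in place of $1+\sum_j f_j(\tmaxj)$---none of which affects the argument.
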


\begin{proof}
Let $\game = (\arena, \rrcond)$ be an RR game with $s$ vertices and $k$ RR conditions and let $f_j$ be a strictly increasing penalty function for every $j$. Define \[\tmaxj = f^{-1}_j(\val_\game) + \bnd(s, k-1)\enspace,\] which satisfies $\tmaxj \ge 1$.
Now, let $\mem = (M, \init, \update)$ where \[M = \left(\prod_{j \in \natspref{k}}\set{ 0, \ldots, \tmaxj }\right) \cup
\set{\bot}\] is the set of all waiting time vectors whose values are bounded by
$\tmaxj$ in coordinate~$j$ with an additional element~$\bot$ denoting that the bound $\tmaxj$ is
exceeded for some $j$. Furthermore, we define
$\init(v) = (t_1, \ldots, t_k)$ with
\[ t_j = \begin{cases}
1 &\text{if $v \in Q_j \setminus P_j$,}\\
0 &\text{otherwise,}
\end{cases}  \]
and $\update(\bot, v) = \bot$. It remains to define $\update((t_1, \ldots,
t_k), v)$: if there is a $j$ such that $t_j =\tmaxj$ and $v \notin P_j$, then we
define $\update((t_1, \ldots, t_k), v) = \bot$. Otherwise, we have
$\update((t_1, \ldots, t_k), v) = (t_1', \ldots, t_k')$ with (cf. the definition of the waiting time~$\waittime$)
\[ t_j' = \begin{cases}
0 		&\text{if $t_j = 0$ and $v \notin Q_j \setminus P_j$,}\\
1 		&\text{if $t_j = 0$ and $v \in Q_j \setminus P_j$,}\\
0 		&\text{if $t_j > 0$ and $v \in P_j$,}\\
t_j+1	&\text{if $t_j > 0$ and $v \notin P_j$.}
\end{cases}  \]
Each $t_j'$ is again bounded by $\tmaxj$. Intuitively, the memory
keeps track of the waiting times of play prefixes up to the thresholds~$\tmaxj$. If
a threshold is exceeded, a sink state is reached.

We define the mean-payoff game~$\game' =
(\arena \times \mem, \weight)$ by 
\[ \weight (
(v, (t_1, \ldots, t_k)) , (v', m)
 ) = \sum_{j \in \natspref{k}} f_j( t_j ) \]
for every memory state~$m \in M$ and $\weight((v,\bot), (v', \bot)) = 1 +
\sum_{j \in \natspref{k}} f_j(\tmaxj)$. Due to $f_j$ being strictly increasing, the maximal edge weight in $\game'$ is $1 + \sum_{j \in \natspref{k}} f_j(\tmaxj)$, which appears only on the edges between vertices of the form~$(v, \bot)$. We continue by stating some simple connections between plays in $\game$ and their
extended plays in $\game'$.

\begin{rmrk}
\label{rem_waittimememory}
Let $\rho = \rho_0 \rho_1 \rho_2 \cdots$ be a play
in $\game$ and
$\extend(\rho)$ its extended play in~$\game'$.
\begin{enumerate}

\item If $\update^*(\rho_0 \cdots \rho_n) \not= \bot$, then $\update^*(\rho_0
\cdots \rho_n) = \waitvec(\rho_0 \cdots \rho_n)$.

\item If $\update^*(\rho_0 \cdots \rho_n) = \bot$, then there is a
prefix~$\rho_0 \cdots \rho_{p}$ of $\rho_0 \cdots \rho_n$ and an
index~$j$ such that $\waittime_j(\rho_0 \cdots \rho_{p}) > \tmaxj$ and every
suffix~$\rho_0 \cdots \rho_{p} \cdots \rho_{s}$ of $\rho_0 \cdots \rho_{p}$
satisfies $\update^*(\rho_0 \cdots \rho_{p} \cdots \rho_{s}) = \bot$.

\item\label{rem_waittimememory_boundedwaittimeval}
 If $\extend(\rho)$ does not visit the memory state~$\bot$, then 
$\val(\rho) = \mpgval_0(\extend(\rho)) < 1 + \sum_{j \in \natspref{k}} f_j(\tmaxj)$, i.e.,
the value of the play~$\rho$ in the RR game and the value of its extended
play~$\extend(\rho)$ in the mean-payoff game are equal (and smaller than the weight of the edges between the sink states with memory~$\bot$)
if the waiting times are bounded by $\tmaxj$.

\item If $\extend(\rho)$ visits the memory state~$\bot$, then
$\mpgval_0(\extend(\rho)) = 1 + \sum_{j \in \natspref{k}} f_j(\tmaxj)$.
\end{enumerate} 
\end{rmrk}

Now, we can begin with the actual proof of Theorem~\ref{theorem_main}, in which we have to deal with several
strategies for Player~$0$. Throughout the proof, 
we denote strategies for $\game$ without a prime and strategies for $\game'$ with a prime. 
The strategies always come in pairs, one for the RR game~$\game$ and one for the mean-payoff game~$\game'$. 

\begin{description}
\item[\boldmath$\sigma$ and $\sigma'$] $\sigma$ uniformly bounds the waiting times in $\game$. This strategy is turned into $\sigma'$ for $\game'$ which never reaches the memory state~$\bot$. This bounds the values~$\mpgval(v)$ of the game~$\game'$.

\item[\boldmath$\sigmaopt$ and $\sigmaoptprime$] $\sigmaoptprime$ is an optimal strategy for $\game'$, which is turned
 into a strategy~$\sigmaopt$ for $\game$. Due to the properties of $\sigma'$, we know that $\sigmaoptprime$ never reaches the memory state~$\bot$, which in turn bounds the waiting times of $\sigmaopt$. Then, we show that $\sigmaopt$ is indeed optimal.

\item[\boldmath$\sigmaopthat$ and $\sigmaopthatprime$] To this end, we assume it is not optimal, i.e., there is
a better strategy~$\sigmaopthat$. This is turned into a strategy~$\sigmaopthatprime$
for $\game'$, which is strictly better than the optimal strategy~$\sigmaoptprime$. This 
contradiction finishes the proof.

\end{description}

Due to Corollary~\ref{corollary_upperboundonval} and
Lemma~\ref{lemma_bounding}, there is a strategy~$\sigma$ for Player~$0$ for
$\game$ such that 
$\val(\sigma, v) \le \sum_{j \in \natspref{k}}f_j(s k 2^k)$ for every $v \in W_0(\game)$ and such that 
$\waittime_j(w) \le \tmaxj$ for every play prefix~$w$ that is consistent with
$\sigma$ and starts in $W_0(\game)$. First, we turn $\sigma$ into a
strategy~$\sigma'$ for $\game'$ and use Remark~\ref{rem_waittimememory} to relate their values.
To this end, let
\begin{equation}
\sigma'((v_0, m_0) \cdots (v_n, m_n)) = (\sigma(v_0 \cdots v_n), \update(m_n,
\sigma(v_0 \cdots v_n) ) \enspace,
\label{eq_sigmatosigma'}
\end{equation}
i.e., we mimic the behavior of $\sigma$ in the first component and update the memory state in the second component
accordingly. Let $\rho' = (v_0, m_0) (v_1, m_1) (v_2, m_2) \cdots$ be
consistent with $\sigma'$. A straightforward induction shows that
$\project(\rho') = v_0 v_1 v_2 \cdots$ is consistent with $\sigma$.
Also, if $(v_0, m_0) = (v_0, \init(v_0))$
then $\rho' = \extend(\project(\rho'))$. If additionally $v_0 \in W_0(\game)$ then an application of
Remark~\ref{rem_waittimememory}(\ref{rem_waittimememory_boundedwaittimeval})
yields \[\val(\project(\rho')) = \mpgval_0(\rho') < 1 + \sum_{j \in
\natspref{k}} f_j(\tmaxj)\enspace.\]
 The following is now immediate.

\begin{rmrk}
	\label{rem_nosink}
If  $v \in W_0(\game)$, then $\mpgval(v, \init(v)) < 1 + \sum_{j \in
\natspref{k}} f_j(\tmaxj)$. 
\end{rmrk}

Now, consider an optimal strategy~$\sigmaoptprime$ for Player~$0$ in $\game'$ as
guaranteed by Theorem~\ref{theorem_mpg}. Due Remark~\ref{rem_nosink}, every play
that starts in a vertex of the form~$(v, \init(v))$ for some $v \in W_0(\game)$
never visits the memory state~$\bot$. Now, let $\sigmaopt$ be the strategy for
$\game$ induced by $\sigmaoptprime$ with memory~$\mem$. Formally, we
define it by giving a next-move function via 
$\nxt(v, m) = v'$ in case we have $\sigmaoptprime(v,m) = (v', m')$ for some $m'$.
Let $\rho$ be a play in $\game$ that is consistent with $\sigmaopt$ and starts
in $W_0(\game)$. A straightforward induction shows that $\extend(\rho)$ (which
starts in $(v, \init(v))$) is consistent with $\sigmaoptprime$. Thus, the
memory state~$\bot$ is never reached and we have $\val(\rho) =
\mpgval_0(\extend(\rho))$.

We claim that $\sigmaopt$ has the desired properties: it is
finite-state and effectively computable. Hence, it remains to show that it is
optimal. Assume it is not. Then, there exists a vertex~$v$ and a
strategy~$\sigmaopthat$
for Player~$0$ in $\game$ such that $\val(\sigmaopthat, v) < \val(\sigmaopt,
v) \le \val_\game$. Due to Lemma~\ref{lemma_bounding}, we can assume that $\sigmaopthat$ bounds
the waiting times for every condition~$j$ by $\tmaxj$. Now, using the same definition as in
(\ref{eq_sigmatosigma'}), we turn $\sigmaopthat$ into a
strategy~$\sigmaopthatprime$ for Player~$0$ in $\game'$.

As above, for every play~$\rho' = (v_0, m_0) (v_1, m_1) (v_2, m_2) \cdots$
that is consistent with $\sigmaopthatprime$ the projected play~$v_0 v_1
v_2 \cdots$ is consistent with $\sigmaopthat$.
Furthermore, if $(v_0, m_0) = (v_0, \init(v_0))$, then $\rho' =
\extend(\project(\rho'))$ and $\mpgval_0(\rho') = \val(\project(\rho'))$.

Recall that $v$ is the vertex of $\game$ from which $\sigmaopthat$ is better
than $\sigmaopt$. Now, consider the optimal strategy~$\tauoptprime$ for
Player~$1$ in $\game'$ (as in Theorem~\ref{theorem_mpg}) and let $\rho'$ be the unique play in $\game'$ that
starts in $(v, \init(v))$ and is consistent with both $\sigmaopthatprime$ and
$\tauoptprime$. We have
\[
\mpgval_0(\rho') = \val(\project(\rho')) \le \val(\sigmaopthat, v) <
\val(\sigmaopt, v) \le \mpgval(v, \init(v)) \le \mpgval_1(\rho')  \enspace,
\]
which yields the desired contradiction to the fact that we have
$\mpgval_0(\rho') \ge \mpgval_1(\rho')$ by definition. Here, the 
inequality~$\val(\sigmaopt, v) \le \mpgval(v, \init(v))$ follows from the fact that every
play that contributes to $\val(\sigmaopt, v)$ has an extended play in $\game'$
that starts in $(v, \init(v))$, is consistent with $\sigmaopt'$, and has the
same value (for Player~$0)$, which is smaller than $\mpgval(v, \init(v))$ by
Theorem~\ref{theorem_mpg}. 
\end{proof}

\section{Conclusion}
\label{sec_conc}

We have presented an algorithm that computes optimal winning strategies for RR games in case the quality of a play is measured by the limit superior of the mean accumulated penalties on the waiting times between requests and their responses. To this end, we proved that the waiting times of strategies with small value can be assumed to be bounded by some doubly-exponential bound. Thus, the search space for an optimal winning strategy is finite and the problem of finding one can be reduced to computing an optimal strategy for a mean-payoff game. 

The reduction presented here is also applicable to a more general winning condition, the so-called poset condition~\cite{Zimmermann09}, where a request has to answered by a partially ordered set of events. In such games, the waiting times are used to measure the time between a request and the occurrence of the last event required in its response. Here, unlike in RR games, we measure the waiting time for every request, even if there is currently an open one. This is necessary since a new request might appear while an old request is already partially answered, i.e., satisfying the remaining events that answer the old request does not suffice to answer the new one. This situation cannot occur in RR games, as a response is a single event. Thus, by viewing an RR game as a poset game, we can also compute an optimal strategy when measuring the quality by taking all requests into account.

Unfortunately, the reduction to mean-payoff games presented here is expensive in terms of running time of the algorithm and also in terms of the memory requirements of the optimal strategy: the size of the mean-payoff game is doubly-exponential and the largest weight in this game is doubly exponential (if the penalty functions are the identity function, otherwise these values are even larger). The best algorithms for mean-payoff games have a polynomial running time in these two parameters. Thus, our algorithm has a doubly-exponential running time. This has to be contrasted with the $\exptime$-completeness of computing an arbitrary winning strategy for an RR game~\cite{ChatterjeeHenzingerHorn11}. 
Furthermore, the size of the memory structure implementing the optimal strategy for the RR game computed by our algorithm is also at least of doubly-exponential size, again larger than arbitrary winning strategies, which are of exponential size~\cite{ChatterjeeHenzingerHorn11, WallmeierHuttenThomas03}.

As mentioned earlier, the upper bound on the waiting times is tight as shown by~\cite{CzerwinskiGK14}. Hence, to obtain a faster algorithm and smaller optimal winning strategies, a different approach is necessary. The exact complexity of computing optimal strategies is an open problem. Another approach to overcome the high complexity is to consider heuristics and approximation algorithms, which compute strategies that realize the value of an optimal strategy up to a certain factor. Finally, the size of the optimal strategy computed here is much larger than the lower bounds on memory requirements in RR games. This raises the question whether there is a tradeoff between the size and the quality of a strategy.

\bibliographystyle{plain}
\bibliography{biblio}

\newcommand{\noopsort}[1]{}
\begin{thebibliography}{10}

\bibitem{AlurEtessamiLaTorrePeled01}
Rajeev Alur, Kousha Etessami, Salvatore~La Torre, and Doron Peled.
\newblock Parametric temporal logic for ``model measuring''.
\newblock {\em ACM Trans. Comput. Log.}, 2(3):388--407, 2001.

\bibitem{BCHJ09}
Roderick Bloem, Krishnendu Chatterjee, Thomas~A. Henzinger, and Barbara
  Jobstmann.
\newblock Better quality in synthesis through quantitative objectives.
\newblock In Ahmed Bouajjani and Oded Maler, editors, {\em CAV}, volume 5643 of
  {\em LNCS}, pages 140--156. Springer, 2009.

\bibitem{BCKN12}
Tom{\'a}s Br{\'a}zdil, Krishnendu Chatterjee, Anton\'{\i}n Kucera, and Petr
  Novotn{\'y}.
\newblock Efficient controller synthesis for consumption games with multiple
  resource types.
\newblock In P.~Madhusudan and Sanjit~A. Seshia, editors, {\em CAV}, volume
  7358 of {\em LNCS}, pages 23--38. Springer, 2012.

\bibitem{CernyChatterjeeHenzingerRadhakrishnaSingh11}
Pavol {\noopsort{Cerny}}\u{C}ern{\'y}, Krishnendu Chatterjee, Thomas~A.
  Henzinger, Arjun Radhakrishna, and Rohit Singh.
\newblock Quantitative synthesis for concurrent programs.
\newblock In Ganesh Gopalakrishnan and Shaz Qadeer, editors, {\em CAV}, volume
  6806 of {\em LNCS}, pages 243--259. Springer, 2011.

\bibitem{ChatterjeeDoyen10}
Krishnendu Chatterjee and Laurent Doyen.
\newblock Energy parity games.
\newblock In Samson Abramsky, Cyril Gavoille, Claude Kirchner, Friedhelm {Meyer
  auf der Heide}, and Paul~G. Spirakis, editors, {\em ICALP (2)}, volume 6199
  of {\em LNCS}, pages 599--610. Springer, 2010.

\bibitem{ChatterjeeHenzingerHorn09}
Krishnendu Chatterjee, Thomas~A. Henzinger, and Florian Horn.
\newblock Finitary winning in $\omega$-regular games.
\newblock {\em ACM Trans. Comput. Log.}, 11(1), 2009.

\bibitem{ChatterjeeHenzingerHorn11}
Krishnendu Chatterjee, Thomas~A. Henzinger, and Florian Horn.
\newblock The complexity of request-response games.
\newblock In Adrian~Horia Dediu, Shunsuke Inenaga, and Carlos Mart\'{\i}n-Vide,
  editors, {\em LATA}, volume 6638 of {\em LNCS}, pages 227--237. Springer,
  2011.

\bibitem{ChatterjeeHenzingerJurdzinski05}
Krishnendu Chatterjee, Thomas~A. Henzinger, and Marcin Jurdzi{\'n}ski.
\newblock Mean-payoff parity games.
\newblock In {\em LICS}, pages 178--187. IEEE Computer Society, 2005.

\bibitem{CzerwinskiGK14}
Wojciech Czerwi\'{n}ski, Tomasz Gogacz, and Eryk Kopczy\'{n}ski.
\newblock Lower bound for {D}ickson's lemma in a special case, 2014.
\newblock Under submission.

\bibitem{Dickson13}
Leonard~E. Dickson.
\newblock Finiteness of the odd perfect and primitive abundant numbers with n
  distinct prime factors.
\newblock {\em Amer. Journal Math.}, 35(4):413--422, 1913.

\bibitem{EhrenfeuchtM79}
A.~Ehrenfeucht and J.~Mycielski.
\newblock Positional strategies for mean payoff games.
\newblock {\em International Journal of Game Theory}, 8:109--113, 1979.

\bibitem{FahrenbergJuhlLarsenSrba11}
Uli Fahrenberg, Line Juhl, Kim~G. Larsen, and Jir\'{\i} Srba.
\newblock Energy games in multiweighted automata.
\newblock In Antonio Cerone and Pekka Pihlajasaari, editors, {\em ICTAC},
  volume 6916 of {\em LNCS}, pages 95--115. Springer, 2011.

\bibitem{FijalkowZimmermann12}
Nathana\"{e}l Fijalkow and Martin Zimmermann.
\newblock Cost-parity and cost-streett games.
\newblock In Deepak D'Souza, Telikepalli Kavitha, and Jaikumar Radhakrishnan,
  editors, {\em FSTTCS 2012}, volume~18 of {\em LIPIcs}, pages 124--135,
  Dagstuhl, Germany, 2012. Schloss Dagstuhl--Leibniz-Zentrum fuer Informatik.

\bibitem{HornThomasWallmeier08}
Florian Horn, Wolfgang Thomas, and Nico Wallmeier.
\newblock Optimal strategy synthesis in request-response games.
\newblock In Sung~Deok Cha, Jin-Young Choi, Moonzoo Kim, Insup Lee, and Mahesh
  Viswanathan, editors, {\em ATVA}, volume 5311 of {\em LNCS}, pages 361--373.
  Springer, 2008.

\bibitem{KupfermanPitermanVardi09}
Orna Kupferman, Nir Piterman, and Moshe~Y. Vardi.
\newblock From liveness to promptness.
\newblock {\em Formal Methods in System Design}, 34(2):83--103, 2009.

\bibitem{Wallmeier08}
Nico Wallmeier.
\newblock {\em Strategien in unendlichen {S}pielen mit
  {L}iveness-{G}ewinnbedingungen: {S}yntheseverfahren, {O}ptimierung und
  {I}mplementierung}.
\newblock PhD thesis, RWTH Aachen University, 2008.

\bibitem{WallmeierHuttenThomas03}
Nico Wallmeier, Patrick H{\"u}tten, and Wolfgang Thomas.
\newblock Symbolic synthesis of finite-state controllers for request-response
  specifications.
\newblock In Oscar~H. Ibarra and Zhe Dang, editors, {\em CIAA}, volume 2759 of
  {\em LNCS}, pages 11--22. Springer, 2003.

\bibitem{Zimmermann09}
Martin Zimmermann.
\newblock Time-optimal winning strategies for poset games.
\newblock In Sebastian Maneth, editor, {\em CIAA}, volume 5642 of {\em LNCS},
  pages 217--226. Springer, 2009.

\bibitem{Zimmermann13}
Martin Zimmermann.
\newblock Optimal bounds in parametric {LTL} games.
\newblock {\em Theor. Comput. Sci.}, 493:30--45, 2013.

\bibitem{ZwickP96}
Uri Zwick and Mike Paterson.
\newblock The complexity of mean payoff games on graphs.
\newblock {\em Theor. Comput. Sci.}, 158(1{\&}2):343--359, 1996.

\end{thebibliography}
\end{document}